\documentclass[12pt,a4,authoryear]{article}
\usepackage[authoryear]{natbib}
\usepackage{authblk}
\usepackage{amsmath}
\usepackage{amsthm}
\usepackage{amssymb}
\usepackage{xcolor}
\usepackage{CJK}
\usepackage{blkarray}
\usepackage{geometry}
\usepackage{mathrsfs}
\usepackage{longtable}
\usepackage{booktabs}
\usepackage{epstopdf}
\usepackage{diagbox}
\usepackage{amsfonts}
\usepackage[colorlinks,
linkcolor=blue,
anchorcolor=blue,
citecolor=blue
]{hyperref}
\usepackage{multicol}
\usepackage{multirow}
\usepackage{algorithm}
\usepackage{algorithmic}
\usepackage{rotating}
\usepackage{caption}
\usepackage{subcaption}
\usepackage{setspace}
\usepackage{graphicx}
\geometry{a4paper,left=1cm,right=1cm,top=3cm,bottom=1.5cm}
\usepackage[toc,page,title,titletoc,header]{appendix}
\usepackage{indentfirst} \setlength{\parindent}{1em}
\linespread{1.5}
\def\bSig\mathbf{\Sigma}

\newtheorem{lemma}{Lemma}
\newtheorem{rmk}{Remark}

\newcommand{\dT}{\mathsf{T}}
\allowdisplaybreaks[4]

\def\squarebox#1{\hbox to #1{\hfill\vbox to #1{\vfill}}}
\def\boxit#1{\vbox{\hrule\hbox{\vrule\kern6pt
			\vbox{\kern6pt#1\kern6pt}\kern6pt\vrule}\hrule}}

\begin{document}
	\title{Summary Statistics Knockoffs Inference with Family-wise Error Rate Control}
	\author[1,$*$]{Catherine Xinrui Yu}
	\author[2,$*$]{Jiaqi Gu}
	\author[3,$*$]{Zhaomeng Chen}
	\author[2,4,$\dagger$]{Zihuai He}
	\affil[1]{Department of Statistics, The Chinese University of Hong Kong}
	\affil[2]{Department of Neurology and Neurological Sciences, Stanford University}
	\affil[3]{Department of Statistics, Stanford University}
	\affil[4]{Department of Medicine (Biomedical Informatics Research), Stanford University}
	
	\affil[$*$]{Equal contribution.}
	\affil[$\dagger$]{Corresponding author: Zihuai He, zihuai@stanford.edu}
	
	\date{}                     
	\setcounter{Maxaffil}{0}
	\renewcommand\Affilfont{\itshape\small}
	\maketitle

	\begin{abstract}
		Testing multiple hypotheses of conditional independence with provable error rate control is a fundamental problem with various applications. To infer conditional independence with family-wise error rate (FWER) control when only summary statistics of marginal dependence are accessible, we adopt GhostKnockoff to directly generate knockoff copies of summary statistics and propose a new filter to select features conditionally dependent to the response with provable FWER control. In addition, we develop a computationally efficient algorithm to greatly reduce the computational cost of knockoff copies generation without sacrificing power and FWER control. 
 Experiments on simulated data and a real dataset of Alzheimer's disease genetics demonstrate the advantage of proposed method over the existing alternatives in both statistical power and computational efficiency.
		
	\end{abstract}

	\maketitle
	
	\section{Introduction}
	\label{s:intro}	
	
	Inference of conditional independence is a fundamental topic in statistics and machine learning and has received lots of interests in various fields. For example, genetic researchers are interested in identifying variants that are associated with diseases conditioning on other variants for genomic medicine development \citep{khera2017genetics,zhu2018causal} and financial researchers utilize conditional dependence between different indexes and factors to explain economic phenomena \citep{patel2015predicting}. As pointed out by \citet{dawid}, it is important to develop statistical methods for conditional dependence inference, either Bayesian or frequentist. Beginning with the widely-used likelihood ratio test and the Pearson's $\chi^2$ test, last several decades have witnessed various parametric or nonparametric approaches. For example, \citet{daudin1980partial} proposes a parametric test based on the linear model  while \citet{JMLR:v15:peters14a} develop a semiparametric test under additive models. 
 In addition, the use of resampling procedures (including permutation and bootstrap) is also explored to boost computational efficiency of nonparametric tests of conditional independence \citep{doran2014permutation, sen2017model}.

	As the aforementioned tests are proposed to infer a single hypothesis, directly applying them to simultaneously infer multiple hypotheses of conditional independence without any adjustment would result in type-I error rate inflation. To address the multiplicity issue, various correction approaches have been proposed, among which the earliest is the widely used Bonferroni correction \citep{dunn1961multiple}. Rejecting nulls whose $p$-values are smaller than the target family-wise error rate (FWER) divided by the number of nulls, the Bonferroni correction would easily suffer power loss as its empirical FWER is far less than the target level when $p$-values are correlated.  Several improvements of the Bonferroni correction have appeared in the literature, including the Šidák correction \citep{Sidak}, Holm's step-down procedure \citep{holm1979simple} and Hochberg's step-up procedure \citep{Hochberg1988}. Inspired by exploratory studies where false discoveries are acceptable under a specified proportion, \citet{benjamini1995controlling} propose the false discovery rate (FDR) as an alternative of FWER and develop a step-up inference procedure with FDR control. Other multiple testing procedures with FDR control include works of \citet{Yben} and \citet{Storey2003}.

	However, the above adjustments of multiple testing usually suffer different limitations in practice. Taking $p$-values as input, these adjustments can not be used in high-dimensional regression models where $p$-values may not exist or are hard to obtain. Even under the low-dimensional case, the validity of $p$-values can also be questionable due to possible model misspecification. In addition, some procedures rely on particular assumptions of dependency structure among $p$-values, which are usually hard to verify. To address these issues, a new powerful method named the knockoff filter \citep{baber,candes2018panning} has been developed recently. By synthesizing knockoff copies that mimic dependency of original features, the knockoff filter simultaneously infer conditional independence between a large set of features and a response without the need to compute $p$-values. With no model assumption on the conditional distribution of the response, such a procedure is valid even when a misspecified model is fitted \citep{candes2018panning}. 
	
	Inspired by this idea, a series of knockoff-based methods have been developed, including but not limited to multiple knockoffs \citep{james} and derandomized knockoffs \citep{zhimei} to improve the stability of inference, the $k$-FWER oriented procedure \citep{janson} to control $k$-FWER, the joint inference procedure for conditional independence of both features and feature groups \citep{katsevich2019multilayer} and the computationally efficient GhostKnockoff that only requires summary statistics of large-scale datasets \citep{he2022ghostknockoff}. 
	There also exist works on improving either power \citep{luo2022improving} or interpretability \citep{fan2020ipad} of knockoff-based procedures with FDR control. However, in many studies, FWER control remains the primary target, especially in the confirmatory phase of large-scale genetic studies and clinical trials. Among all existing knockoff-based methods, the procedure of \citet{janson} and derandomized knockoffs \citep{zhimei} are the two that provide provable FWER control. However, both methods suffer limitations in practice, where people commonly target to control FWER under $0.01$, $0.05$ or $0.1$. As a general procedure to control $k$-FWER (the probability of making at least $k$ false discoveries, which degenerates to FWER when $k=1$) for arbitrary integer $k$, directly applying the procedure of \citet{janson} with $k=1$ on FWER-controlled inference would greatly lose power when the target FWER level is small. Specifically, to control FWER under $\alpha=0.01, 0.05$ or $0.1$ as commonly in practice, the procedure of \citet{janson} only makes rejections with a small probability $2\alpha$, leading to a power bound $2\alpha$ and uninformative conclusions in most of cases. Although derandomized knockoffs \citep{zhimei}, which perform inference by running the procedure of \citet{janson} for multiple times, can also control FWER, experiments in \citet{zhimei} show that its empirical FWER is far less than the target level, suggesting the potential of power improvement.  In addition, both methods require access to individual observations in generating knockoffs, which are usually infeasible in large-scale genetic data analysis due to privacy issue and limited computational resources. Under such cases, only summary statistics that do not contain individual identifiable information are provided, including but not limited to (estimated) moments  of genotypes (e.g., mean, variance-covariance matrix, skewness and kurtosis) and $Z$-scores of Pearson's correlations between genotypes and phenotypes or disease-associate traits of interest. Thus, it is needed to develop a knockoff-based procedure to perform inference on only summary statistics with improved power and tightly controlled FWER.

	In this article, we develop a new knockoff filter to select features that are conditionally dependent on the response with provable FWER control and only access to summary statistics. By adopting the idea of GhostKnockoff \citep{he2022ghostknockoff}, our method directly generate knockoff copies of summary statistics of marginal dependence without using any individual observations. With the flexibility in paring with any knockoff models, our proposed approach generates multiple knockoffs \citep{james} instead of using the repeated strategy in derandomized knockoffs \citep{zhimei}. To circumvent the time-consuming Cholesky decomposition of a high-dimensional matrix in generating multiple knockoffs, we propose a computationally efficient algorithm  that greatly reduces computational cost of knockoffs generation without sacrificing any power and FWER control. Through extensive experiments of simulated and real data, we show that our proposed method manages to provide tight FWER control at the common level $\alpha=0.05$ and exhibits great power in comparison with existing methods.

	The rest of this article is organized as follows. In Section  \ref{section2}, we formulate multiple testing of conditional independence and introduce our new knockoff filter with theoretical studies on its FWER control. Section \ref{section3} provides strategies to improve the computational efficiency and comparisons with existing knockoff-based method in computational cost. Via extensive simulation studies in Sections  \ref{section4}, we investigate empirical performance of the proposed knockoff FWER filter in both FWER control and power  with comparisons to existing FWER-oriented knockoffs methods. We also apply our method on a  genetic data to investigate important genetic variants of Alzheimer's disease in Section \ref{section5}. Section \ref{section6} concludes with discussions.

	\section{Methodology}\label{section2}
	
	\subsection{Problem Statement}
	Consider independently and identically distributed (i.i.d.) observations $\{(\textbf{x}_i,y_i)\in\mathbb{R}^p\times \mathbb{{R}}|i=1,\ldots,n\}$ from a joint distribution $f(\mathbf{ X},Y )$ with $\textbf{X}=(X_1,\ldots,X_p)^\dT$, our interest is to test hypotheses 
	\begin{equation}\label{indep}
		H_j:	Y \perp X_j |\textbf{X}_{-j},\quad j=1,\ldots,p,
	\end{equation}
	under the joint distribution $f(\mathbf{ X},Y )$, where $\textbf{X}_{-j}  =(X_1,\ldots,X_{j-1},X_{j+1},\ldots,X_p)^\dT$.
	Believing that the conditional distribution $Y|\mathbf{ X}$ only depends upon a small subset of relevant features in $\textbf{X}$, our target is to classify each feature as relevant or not. Here, feature $X_j$ is said to be nonnull (or relevant to $Y$) if the corresponding assumption $H_j$ is not true, and null (or irrelevant to $Y$) otherwise.
	Let ${\mathcal{H}_0}\subset \{1,\dots,p\}$ denote the set of null features whose $H_j$'s are true and the remaining as $\mathcal{H}_1$. 
	In this paper, we denote $\mathcal{R}$ as the set of rejected hypotheses and $V$ as the number of false discoveries (true $H_j$'s being rejected).
	Our target is to obtain an estimate $\mathcal{R}$ of $\mathcal{H}_1$ such that
	\begin{equation}
		\text{FWER}= \mathbb{P}(V\geq 1), \quad \text{where }V=\#\{j:j\in \mathcal{H}_0\cap \mathcal{R}\},
	\end{equation}
	is not larger than $\alpha$ ($0<\alpha<1$).

	\subsection{GhostKnockoff}\label{GK}

In large-scale genome-wide association studies, individual data required for  knockoffs generation are generally inaccessible. Even available, computational cost can be extremely large due to the large sample size, which blocks the potential of existing individual knockoff-based methods in analyzing large-scale data. Recently, \citet{he2022ghostknockoff} developed GhostKnockoff, which only requires $Z$-scores of Pearson's correlations between features and the response and the (estimated) correlation matrix $\boldsymbol{\Sigma}$ of features  to directly generate knockoff copies of  $Z$-scores as follows.


	Let $\mathbf{x}_i=(x_{i1},\dots,x_{ip})^\dT$ and $y_i$ be the feature vector and the outcome of the $i$-th individual ($i=1,\ldots,n$) under the distribution $f(\mathbf{X},Y)$. Without loss of generality, we assume that both features and response are standardized with mean 0 and variance 1.
	To measure importance of different features, we consider $Z$-scores of Pearson's correlations,
	\begin{equation}\label{zscore}
		\mathbf{Z}=(Z_1,\ldots,Z_p)^\dT=\frac{1}{\sqrt{n}} \sum_{i=1}^n\textbf{x}_iy_i .
	\end{equation}
	As shown in \citet{he2022ghostknockoff}, when the Gaussian model $\textbf{X}\sim \text{N}(\mathbf{0},\boldsymbol{\Sigma})$ is assumed to generate knockoff copies $(\mathbf{\widetilde{X}}^{1},\dots \mathbf{\widetilde{X}}^{M})$ of features following the method of \citet{james},  corresponding knockoff copies of $Z$-scores $\mathbf{\widetilde{Z}}^{1},\dots \mathbf{\widetilde{Z}}^{M}$ that
	\begin{equation}\label{zscore_knockoff}
		\mathbf{\widetilde{Z}}^{m}=(\widetilde{Z}^{m}_1,\ldots,\widetilde{Z}^{m}_p)^\dT=\frac{1}{\sqrt{n}} \sum_{i=1}^n\widetilde{\textbf{x}}^{m}_iy_i ,\quad \text{for }m=1,\ldots,M,
	\end{equation}
	satisfy 
	\begin{equation}
		\mathbf{\widetilde{Z}}^{1:M}=
		\mathbf{P}\textbf{Z}+\widetilde{\textbf{E}},\label{Knockoff_conditional_Z}
	\end{equation}
	where $\mathbf{\widetilde{Z}}^{1:M}=(\widetilde{Z}^{1}_1,\ldots,\widetilde{Z}^{1}_p,\ldots,\widetilde{Z}^{M}_1,\ldots,\widetilde{Z}^{M}_p)^\dT$, $\widetilde{\textbf{E}}\sim \text{N}\left(\mathbf{0},\textbf{V}\right)$,
	\begin{equation}\label{equation PM}
		\mathbf{P}=\left(\begin{array}{c}\mathbf{I}-\mathbf{D} \mathbf{\Sigma}^{-\mathbf{1}} \\ \vdots \\ \mathbf{I}-\mathbf{D} \mathbf{\Sigma}^{-1}\end{array}\right) \text{, }\quad \mathbf{V}=\left(\begin{array}{cccc}\mathbf{C} & \mathbf{C}-\mathbf{D} & \cdots & \mathbf{C}-\mathbf{D} \\ \mathbf{C}-\mathbf{D} & \mathbf{C} & \ldots & \mathbf{C}-\mathbf{D} \\ \vdots & \vdots & \ddots & \vdots \\ \mathbf{C}-\mathbf{D} & \mathbf{C}-\mathbf{D} & \ldots & \mathbf{C}\end{array}\right),
	\end{equation}
 $\mathbf{D}=\operatorname{diag}\left(s_1, \ldots, s_p\right)\succeq 0$ is a diagonal matrix and $\mathbf{C}=2 \mathbf{D}-\mathbf{D} \mathbf{\Sigma}^{-\mathbf{1}} \mathbf{D}$. 
	In this article, we use the SDP construction of 
	$\mathbf{D}=\{s_1,\ldots,s_p\}$ 
	by solving the optimization problem,
	\begin{equation}\label{SDP}
		\min\sum_{j=1}^p\left|1-s_j\right|, \quad\text { s.t. }\begin{cases}
			\frac{M+1}{M} \mathbf{\Sigma-D} \succeq 0, \\
			s_j \geq 0,\quad 1 \leq j \leq p.
		\end{cases}
	\end{equation}
	In practice, the number of knockoff copies ($M$) is determined by the target FWER level as details in Section \ref{process}.

	With the convention that $\mathbf{\widetilde{Z}}^{0}=\textbf{Z}$, we show in Appendix \ref{Proof:exchangeability} that under Gaussian model $\textbf{X}\sim \text{N}(\mathbf{0},\boldsymbol{\Sigma})$, $Z$-scores $\mathbf{\widetilde{Z}}^{0:M}=(\widetilde{Z}^{0}_1,\ldots,\widetilde{Z}^{0}_p,\ldots,\widetilde{Z}^{M}_1,\ldots,\widetilde{Z}^{M}_p)^\dT$ possess the  extended exchangeability property with respect to $\{H_j:j=1,\ldots,p\}$,
	\begin{equation}\label{exchangeablility}
		\mathbf{\widetilde{Z}}^{0:M}_{swap(\boldsymbol{\sigma})}\,{\buildrel d \over =}\,\mathbf{\widetilde{Z}}^{0:M},
	\end{equation}
where
	\begin{itemize}
		\item $\mathbf{\widetilde{Z}}^{0:M}_{swap(\boldsymbol{\sigma})}=(\widetilde{Z}^{\sigma_1(0)}_1,\ldots,\widetilde{Z}^{\sigma_p(0)}_p,\ldots,\widetilde{Z}^{\sigma_1(M)}_1,\ldots,\widetilde{Z}^{\sigma_p(M)}_p)^\dT$;
		\item $\boldsymbol{\sigma}=\{\sigma_1,\ldots,\sigma_p\}$ is a family of permutations on $\{0,1,\ldots,M\}$;
		\item $\sigma_j$ is any permutation on $\{0,1,\ldots,M\}$ if $H_j$ is true and is the identity permutation if $H_j$ is false ($j=1,\ldots,p$).
	\end{itemize}

	Given $Z$-scores and their knockoff copies $\mathbf{\widetilde{Z}}^{0},\dots \mathbf{\widetilde{Z}}^{M}$, we then compare $\widetilde{Z}_j^0$ with its knockoff copies to obtain the knockoff statistics of feature  $X_j$ as
	\begin{equation}
		\label{kappa_and_tau}	
		\kappa_j=\underset{0 \leq m \leq M}{\arg \max } \text{ } (\widetilde{Z}_j^m)^2,\quad \tau_j=(\widetilde{Z}_j^{\kappa_j})^2-\underset{m\neq \kappa_j}{\operatorname{median }} (\widetilde{Z}_j^m)^2,\quad j=1,\ldots,p.
	\end{equation}
	Here, $\kappa_j$ and $\tau_j$ are generalization of the sign and the magnitude to feature statistic $W_j$ in \citet{candes2018panning} respectively. Similar to existing works of multiple knockoffs \citep{james,He2021,he2022ghostknockoff}, we also have $\kappa_j$'s corresponding to null features are independent uniform random variables as provided in  Lemma \ref{lemma1}.
	
	\begin{lemma} \label{lemma1}
		If $\{(\kappa_j,\tau_j)|j=1,\ldots,p\}$ are generated from (\ref{kappa_and_tau}) with $Z$-scores $\mathbf{\widetilde{Z}}^{0},\dots \mathbf{\widetilde{Z}}^{M}$ possessing the extended exchangeability property (\ref{exchangeablility}), $\{\kappa_j|j\in \mathcal{H}_0\}$ are independent uniform random variables   on   $\{0,1, \ldots, M\}$ conditional on $\{\kappa_j|j\notin \mathcal{H}_0\}$ and $\{\tau_j|j=1,\ldots,p\}$.
	\end{lemma}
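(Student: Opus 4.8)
The plan is to derive Lemma~\ref{lemma1} directly from the extended exchangeability property \eqref{exchangeablility}, combined with two elementary equivariance/invariance observations about the map $\widetilde{\mathbf{Z}}^{0:M}\mapsto\{(\kappa_j,\tau_j)\}_{j=1}^p$; this is the same line of argument used for the analogous statements in \citet{james,He2021,he2022ghostknockoff}.

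First I would record how $(\kappa_j,\tau_j)$ transforms under a swap. Fix a family $\boldsymbol{\sigma}=\{\sigma_1,\dots,\sigma_p\}$ that is the identity on every $j\notin\mathcal{H}_0$, so it is an admissible family in \eqref{exchangeablility}. Writing the $(j,m)$-coordinate of $\widetilde{\mathbf{Z}}^{0:M}_{swap(\boldsymbol{\sigma})}$ as $\widetilde{Z}_j^{\sigma_j(m)}$, the $\arg\max$ in \eqref{kappa_and_tau} computed from the swapped vector equals $\sigma_j^{-1}(\kappa_j)$, whereas $\tau_j$ is unchanged: the largest squared coordinate $\max_m(\widetilde{Z}_j^m)^2$ is invariant, and as $m$ ranges over $\{0,\dots,M\}\setminus\{\sigma_j^{-1}(\kappa_j)\}$ the index $\sigma_j(m)$ ranges over $\{0,\dots,M\}\setminus\{\kappa_j\}$, so the multiset entering the median is preserved. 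Hence the statistic vector computed from $\widetilde{\mathbf{Z}}^{0:M}_{swap(\boldsymbol{\sigma})}$ is exactly $\big(\{(\sigma_j^{-1}(\kappa_j),\tau_j)\}_{j\in\mathcal{H}_0},\{(\kappa_j,\tau_j)\}_{j\notin\mathcal{H}_0}\big)$, and \eqref{exchangeablility} forces this to have the same joint law as $\big(\{(\kappa_j,\tau_j)\}_{j\in\mathcal{H}_0},\{(\kappa_j,\tau_j)\}_{j\notin\mathcal{H}_0}\big)$.

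Next I would convert this distributional identity into the conditional statement. Let $\mathcal{G}=\sigma\big(\{\kappa_j\}_{j\notin\mathcal{H}_0},\{\tau_j\}_{j=1}^p\big)$. For any bounded $\mathcal{G}$-measurable $g$ and any two tuples $(c_j)_{j\in\mathcal{H}_0},(c_j')_{j\in\mathcal{H}_0}\in\{0,\dots,M\}^{|\mathcal{H}_0|}$, choose for each $j\in\mathcal{H}_0$ a permutation $\sigma_j$ with $\sigma_j(c_j)=c_j'$ (possible, and the choices are independent across $j$); the identity above then yields $\mathbb{E}\big[\mathbf{1}\{\kappa_j=c_j\ \forall j\in\mathcal{H}_0\}\,g\big]=\mathbb{E}\big[\mathbf{1}\{\kappa_j=c_j'\ \forall j\in\mathcal{H}_0\}\,g\big]$, hence $\mathbb{E}[\mathbf{1}\{\kappa_j=c_j\ \forall j\}\mid\mathcal{G}]=\mathbb{E}[\mathbf{1}\{\kappa_j=c_j'\ \forall j\}\mid\mathcal{G}]$ almost surely. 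Summing over the $(M+1)^{|\mathcal{H}_0|}$ values of the tuple, each conditional probability equals $(M+1)^{-|\mathcal{H}_0|}$; since the joint $\mathcal{G}$-conditional law of $\{\kappa_j\}_{j\in\mathcal{H}_0}$ is thus uniform on the product set $\{0,\dots,M\}^{|\mathcal{H}_0|}$, the coordinates are conditionally independent and each uniform, which is the claim.

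I expect the only delicate points to be bookkeeping rather than conceptual: getting the direction of the permutation right in the swap, so that $\kappa_j\mapsto\sigma_j^{-1}(\kappa_j)$; verifying the $\tau_j$-invariance carefully, precisely because the median is taken over the complement of the moving index $\kappa_j$; and handling ties in the $\arg\max$. Under the Gaussian model \eqref{Knockoff_conditional_Z} the vector $\widetilde{\mathbf{Z}}^{0:M}$ admits a density, so ties occur with probability zero; alternatively one fixes a deterministic tie-breaking rule that commutes with relabeling of $\{0,\dots,M\}$, and the argument goes through verbatim.
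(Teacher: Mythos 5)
Your proposal is correct and follows essentially the same route as the paper's own proof: both establish that under an admissible swap the statistics transform as $\kappa_j\mapsto\sigma_j^{-1}(\kappa_j)$ for null $j$ while every $\tau_j$ is invariant, and then invoke the extended exchangeability property \eqref{exchangeablility} to conclude conditional uniformity and independence. Your second paragraph simply makes rigorous the paper's terse final step (``since $\sigma_j$ can be any permutation, $\sigma_j^{-1}$ can also be any permutation''), and your explicit checks of the median's invariance and of almost-sure uniqueness of the $\arg\max$ are welcome extra care rather than a different argument.
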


	Proof of Lemma \ref{lemma1} is provided in Appendix \ref{proof:lemma1}, which is analogous to the proof of Lemma 3.2 in \citet{james}. However, different to the work of \citet{james}, our Lemma \ref{lemma1} directly utilizes the extended exchangeability property of $Z$-scores $\mathbf{\widetilde{Z}}^{0},\dots \mathbf{\widetilde{Z}}^{M}$.

	\subsection{Inference with FWER Control}\label{process}
	
	Given knockoff statistics $\{(\kappa_j,\tau_j)|j=1,\ldots,p\}$ generated from (\ref{kappa_and_tau}), the next step is developing a FWER filter to reject as many $H_j$'s as possible  such that (a) evidences against rejected $H_j$'s are strong and (b) the probability of rejecting at least one true $H_j$ (FWER) is bounded by the target level $\alpha$. Without loss of generality, we reindex all features such that $\tau_j$'s satisfy $\tau_1\geq \tau_2\geq \cdots\geq \tau_p$ as shown in Figure \ref{fig:fwerfilter}. Recalling that $\kappa_j$ and $\tau_j$ respectively measure whether the original feature $X_j$ is more important than its knockoff copies and the overall importance of $X_j$ and its knockoff copies, we can achieve (a) by including the first several $H_j$'s in the rejection set $\mathcal{R}$ whose $\kappa_j$'s are $0$ and $\tau_j$'s are large. In other words, we achieve (a) by letting $\mathcal{R}=\{j|\kappa_{j}=0\}\cap\{1,\ldots,T\}$ with the threshold $T$ as large as possible. However, as $T$ increases, $\mathcal{R}$ becomes larger, making it more likely to have true $H_j$'s in $\mathcal{R}$ and leading to violation of FWER control in (b). Thus, we propose the FWER filter (Algorithm \ref{FWER_filter}) to obtain the rejection set such that (a) and (b) are achieved simultaneously.
	
	{\begin{algorithm}[t]
			\caption{{FWER filter}}\label{FWER_filter}
			\begin{algorithmic}[1]
				\STATE \textbf{Input:} Knockoff statistics $\{(\kappa_j,\tau_j)|j=1,\ldots,p\}$, the number of knockoff copies $M$ and the target FWER level $\alpha$.
				\STATE Reindex all features such that $\tau_1\geq \tau_2\geq \cdots\geq \tau_p$.
				\STATE Compute the largest $v$ such that 
				\begin{equation}\label{selection}
					\mathbb{P}\{U\geq 1|U\sim \text{NB}(v,1/(M+1))\}\leq \alpha.
				\end{equation}
				\STATE Initialize the threshold $T=0$.
				\REPEAT
				\STATE $T=T+1$.
				\STATE Compute the rejection set $\mathcal{R}=\{j|\kappa_{j}=0\}\cap\{1,\ldots,T\}$.
				\UNTIL{$T=p$ or there are $v$ nonzero elements in $\{\kappa_1,\ldots,\kappa_T\}$.}
				\STATE \textbf{Output:} $\mathcal{R}$.
			\end{algorithmic}
	\end{algorithm}}

	As the threshold $T$ increases from $1$ to $p$, Algorithm \ref{FWER_filter} sequentially includes $H_T$ in the rejection set $\mathcal{R}$ if the corresponding $\kappa_T$ is zero (steps 6-7). Such a procedure terminates when the $v$-th nonzero $\kappa_T$'s  is met or $T=p$ (step 8), where $v$ is determined by step 3 to guarantee FWER control at the level $\alpha$. The rationale underlying the selection rule (\ref{selection}) comes from Lemma \ref{lemma1}, which suggests $\kappa_j$'s for all true $H_j$'s are independent uniform random variables on $\{0,1, \ldots, M\}$. That is to say, indicators $I(\kappa_j=0)$'s independently follow binomial distribution $\text{B}(1,1/(1+M))$ for all true $H_j$'s. Based on the connection of binomial distribution and negative binomial distribution, we have among all true $H_j$'s, the number of $\kappa_j$'s that equal zero before the $v$-th nonzero $\kappa_j$ follows negative binomial distribution $\text{NB}(v,1/(M+1))$. Given that the rejection set $\mathcal{R}=\{j|\kappa_{j}=0\}\cap\{1,\ldots,T\}$ includes $H_j$'s with $\kappa_j=0$ and $j\leq T$, if the inclusion procedure (steps 6-7) stops when we meet the $v$-th nonzero $\kappa_j$ among all true $H_j$'s, we have the number of false discoveries (or the number of zero $\kappa_j$'s of true $H_j$ in $\mathcal{R}$) follows $\text{NB}(v,1/(M+1))$. Thus, to control FWER under $\alpha$ with power as large as possible, we should select $v$ as large as possible such that (\ref{selection}) stands. In practice, as whether $H_j$'s are true or not is unknown, Algorithm \ref{FWER_filter} terminates the inclusion procedure (steps 6-7) when we meet the $v$-th nonzero $\kappa_j$ (as shown in Figure \ref{fig:fwerfilter}) or $T=p$, which does not violate the FWER control.
	
	\begin{figure}[t]
		\centering
		\begin{subfigure}{0.48\linewidth}
			\includegraphics[width=\linewidth]{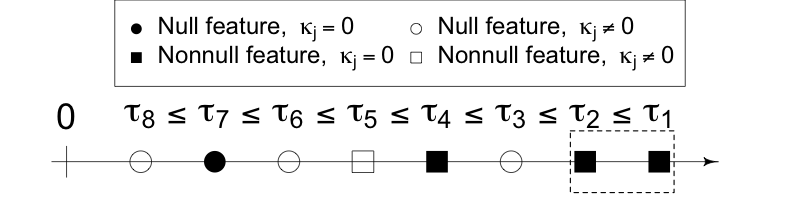}
			\caption{$v=1$.}
		\end{subfigure}
		\begin{subfigure}{0.48\linewidth}
			\includegraphics[width=\linewidth]{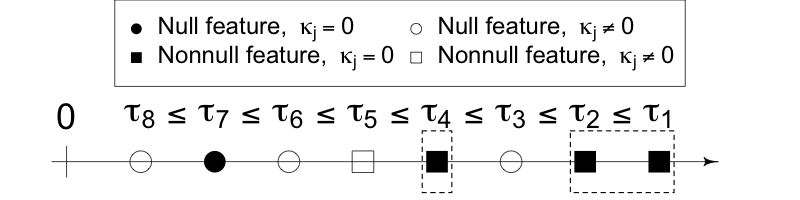}
			\caption{$v=2$.}
		\end{subfigure}
		\caption{Graphical illustration of the rejection set $\mathcal{R}$ computed by Algorithm \ref{FWER_filter} which terminates the inclusion procedure (steps 6-7) when we meet the $v$-th nonzero $\kappa_j$ for different $v$.}
		\label{fig:fwerfilter}
	\end{figure}
	
	As the selection rule (\ref{selection}) is valid in FWER control for any target level $\alpha$ and the number of knockoff copies $M$, we can utilize it to determine $M$ for knockoff copies generation in Section \ref{GK}. If $M$ is too small,  selection rule (\ref{selection}) would obtain $v=0$, resulting in empty rejection set and no power. Taking such issue into account, we suggest choosing the smallest $M$ such that selection rule (\ref{selection}) obtains $v=1$. For example, when the target FWER level is $\alpha=0.05$, we have selection rule (\ref{selection}) obtains $v=0$ for $M=1,\ldots,18$ and $v=1$ for $M=19$. As a result, we use $M=19$ (which leads to $v=1$) throughout all numerical experiments with the target FWER level $\alpha=0.05$ in this article.
	
	\begin{rmk}
		Similar to methods in \citet{janson} and \citet{zhimei}, Algorithm \ref{FWER_filter} also rejects $H_j$'s with FWER control. However, there exist substantial differences between our method and existing ones. 
		
		In  \citet{janson} where $M$ is fixed as $1$, they propose two selection rules of $v$ to control $k$-FWER under the target level $\alpha$. The first one is the deterministic rule which computes the largest $v$ such that 
		$$
			\mathbb{P}\{U\geq k|U\sim \text{NB}(v,1/2)\}\leq \alpha,
		$$
		analogously to (\ref{selection}). However, when we target to control FWER (or equivalently $k=1$)  at the target level $\alpha<1/2$, the deterministic rule obtains $v=0$, resulting in empty rejection set and no power. To address this issue, they further propose the stochastic rule to select $v$ where
		$$
		\begin{aligned}
			v=&\begin{cases}
			v_\alpha,&\text{\rm w.p. } 1-q,\\
			v_\alpha+1,&\text{\rm w.p. } q,
		\end{cases}\\\text{\rm such that }&\begin{cases}
		\mathbb{P}\{U\geq k|U\sim \text{NB}(v_\alpha,1/2)\}\leq \alpha\leq \mathbb{P}\{U\geq k|U\sim \text{NB}(v_\alpha+1,1/2)\},\\
		(1-q)\cdot\mathbb{P}\{U\geq k|U\sim \text{NB}(v_\alpha,1/2)\}+q\cdot\mathbb{P}\{U\geq k|U\sim \text{NB}(v_\alpha+1,1/2)\}=\alpha.
	\end{cases}
		\end{aligned}
		$$
		However, when we target to control FWER (or equivalently $k=1$) at the target level $\alpha<1/2$, the stochastic rule obtains $v_\alpha=0$ and $q=2\alpha$. In other words, the procedure of \citet{janson} only makes rejections ($v=1$) with probability $2\alpha$ and thus its power is bounded by $2\alpha$. As a result, the procedure of \citet{janson} would suffer great power loss in practice, where people commonly target to control FWER under $0.01$, $0.05$ or $0.1$. In contrast, such a power bound does not exist for our method where $M$ is selected as the smallest one that leads to $v=1$ via (\ref{selection}).
		
		Although derandomized knockoffs \citep{zhimei} are able to control FWER, such a control is achieved indirectly by controlling per family error rate {\rm($\text{PFER}=\mathbb{E}(V)$)}. According to the Markov inequality, $$\text{\rm PFER}=\mathbb{E}(V)\geq \mathbb{P}(V\geq 1)=  \text{\rm FWER},$$ PFER control is more stringent than FWER control. Thus, as shown in experimental results in Section \ref{section4}, there unavoidably exists a gap between the empirical FWER and the target level, leading to power loss. In contrast, as our FWER filter directly controls FWER, its empirical FWER concentrates at the target level and thus has higher power.
	\end{rmk}

	\section{Computational Strategy}\label{section3}

	\subsection{Efficient Generation of Knockoff Copies}
	
	To generate knockoff copies of $Z$-scores $\mathbf{\widetilde{Z}}^{1},\dots \mathbf{\widetilde{Z}}^{M}$ via (\ref{Knockoff_conditional_Z}), it is crucial to sample a $pM$-dimensional random vector $\widetilde{\textbf{E}}\sim \text{N}(\mathbf{0,V})$. 
	However, directly doing so usually requires performing Cholesky decomposition of the $pM\times pM$ matrix $\mathbf{V}$, which is computationally intensive especially when $M$ is large. For example, as the target FWER level is fixed as $\alpha=0.05$ throughout this article, we need to generate $M=19$ knockoff copies of $Z$-scores and thus need to implement Cholesky decomposition of a $19000\times 19000$ matrix when we are inferring $p=1000$ features. To circumvent such an obstacle, we propose an efficient algorithm to generate $\widetilde{\mathbf{E}}$ as detailed in the following.
	
	Noting that the covariance matrix $\textbf{V}$ can be decomposed as $\textbf{V}=\textbf{V}_1+\textbf{V}_2$ where \begin{equation}\label{V1V2}
		\textbf{V}_1=\begin{pmatrix}
			\textbf{C}-\frac{M-1}{M}\textbf{D}&\textbf{C}-\frac{M-1}{M}\textbf{D}&\cdots&\textbf{C}-\frac{M-1}{M}\textbf{D}\\
			\textbf{C}-\frac{M-1}{M}\textbf{D}&\textbf{C}-\frac{M-1}{M}\textbf{D}&\cdots&\textbf{C}-\frac{M-1}{M}\textbf{D}\\
			\vdots&\vdots&\ddots&\vdots\\
			\textbf{C}-\frac{M-1}{M}\textbf{D}&\textbf{C}-\frac{M-1}{M}\textbf{D}&\cdots&\textbf{C}-\frac{M-1}{M}\textbf{D}\\
		\end{pmatrix}\text{, }\quad \textbf{V}_2=\begin{pmatrix}
			\frac{M-1}{M}\textbf{D}&-\frac{1}{M}\textbf{D}&\cdots&-\frac{1}{M}\textbf{D}\\
			-\frac{1}{M}\textbf{D}&\frac{M-1}{M}\textbf{D}&\cdots&-\frac{1}{M}\textbf{D}\\
			\vdots&\vdots&\ddots&\vdots\\
			-\frac{1}{M}\textbf{D}&-\frac{1}{M}\textbf{D}&\cdots&\frac{M-1}{M}\textbf{D}\\
		\end{pmatrix},
	\end{equation} 
and $$\begin{aligned}
	\textbf{C}-\frac{M-1}{M}\textbf{D}=&\frac{M+1}{M}\textbf{D}-\textbf{D}\boldsymbol{\Sigma}^{-1}\textbf{D}
	=\textbf{D}^{1/2}\Biggl\{\frac{M+1}{M}\textbf{I}-\textbf{D}^{1/2}\boldsymbol{\Sigma}^{-1}\textbf{D}^{1/2}\Biggr\}\textbf{D}^{1/2}
	\succeq 0,
\end{aligned}$$
we can sample $\widetilde{\textbf{E}}\sim \text{N}(\mathbf{0,V})$ as the sum
$$\begin{aligned}
	&\widetilde{\mathbf{E}}=
	\widetilde{\mathbf{E}}_1+\widetilde{\mathbf{E}}_2,\quad
	\text{where }\widetilde{\mathbf{E}}_1
	\sim \text{N}(\textbf{0},\textbf{V}_1)\text{ and } \widetilde{\mathbf{E}}_2
	\sim \text{N}(\textbf{0},\textbf{V}_2).
\end{aligned}$$ 
Based on the block structure of $\textbf{V}_1$ and $\textbf{V}_2$ in (\ref{V1V2}), random vectors  $\widetilde{\textbf{E}}_1$ and $\widetilde{\textbf{E}}_2$ can be efficiently generated as follows.
	
	\begin{itemize}
		\item[$\star$] \textbf{(Generating $\widetilde{\mathbf{E}}_1$)} Noting that $\mathbf{V}_1$ is a $M\times M$ block matrix with the common positive semi-definite block $\textbf{C}-\frac{M-1}{M}\textbf{D}$,
		we generate $\widetilde{\mathbf{E}}_1=((\textbf{L}\widetilde{\textbf{e}}_1)^\dT,\ldots,(\textbf{L}\widetilde{\textbf{e}}_1)^\dT)^\dT$, where $\widetilde{\textbf{e}}_1\sim \text{N}(\mathbf{0},\textbf{I})$ and $\textbf{L}$ is the solution of the $p\times p$ Cholesky decomposition equation,
		$$\textbf{C}-\frac{M-1}{M}\textbf{D}=\textbf{L}\textbf{L}^\dT.$$
		
		\item[$\star$] \textbf{(Generating $\widetilde{\mathbf{E}}_2$)} With i.i.d. samples  $\widetilde{\textbf{e}}^{1}_2,\widetilde{\textbf{e}}^{2}_2,\ldots,\widetilde{\textbf{e}}^{M}_2\sim\text{N}(\mathbf{0},\textbf{D})$, we  
		compute $\widetilde{\mathbf{E}}_2=(	(\widetilde{\textbf{e}}^{1}_2-\overline{\textbf{e}}_2)^\dT,(\widetilde{\textbf{e}}^{2}_2-\overline{\textbf{e}}_2)^\dT,\ldots,(\widetilde{\textbf{e}}^{M}_2-\overline{\textbf{e}}_2)^\dT)^\dT$, where $\overline{\textbf{e}}_2=M^{-1}\sum_{m=1}^{M}\widetilde{\textbf{e}}^{m}_2$.

		%
	\end{itemize}

	Details of generating knockoff copies of $Z$-scores $\mathbf{\widetilde{Z}}^{1},\dots \mathbf{\widetilde{Z}}^{M}$ are summarized in Algorithm \ref{alg:eff}. Compared with the trivial approach that relies on Cholesky decomposition of $\textbf{V}$, Algorithm \ref{alg:eff} decreases the computational complexity from $O(M^3p^3)$ to $O(p^3)$.
	{\begin{algorithm}
			\caption{Efficient generation of knockoff copies of $Z$-scores.}\label{alg:eff}
			\begin{algorithmic}[1]
				\STATE \textbf{Input:} The covariance matrix $\mathbf{\Sigma}$ of $\textbf{X}$ and the number of knockoff copies ${M}$.
				\STATE Compute the diagonal matrix $\textbf{D}$ by solving the optimization problem (\ref{SDP}).
				\STATE Compute $\textbf{I}-\textbf{D}\boldsymbol{\Sigma}^{-1}$ and $\textbf{P}$ via (\ref{equation PM}).
				\STATE Perform Cholesky decomposition 
				$$\textbf{C}-\frac{M-1}{M}\textbf{D}=\textbf{L}\textbf{L}^\dT,$$ where $\mathbf{C}=2\mathbf{D}-\mathbf{D\Sigma^{-1}D}$.
				\STATE Sample a $p$-dimensional random vector $\widetilde{\textbf{e}}_1\sim\text{MVN}(\mathbf{0},\textbf{I})$.
				\STATE Compute $\widetilde{\mathbf{E}}_1=((\textbf{L}\widetilde{\textbf{e}}_1)^\dT,\ldots,(\textbf{L}\widetilde{\textbf{e}}_1)^\dT)^\dT$.
				\STATE Draw i.i.d. $p$-dimensional random vectors   $\widetilde{\textbf{e}}^{1}_2,\widetilde{\textbf{e}}^{2}_2,\ldots,\widetilde{\textbf{e}}^{M}_2\sim\text{N}(\mathbf{0},\textbf{D})$.
				\STATE Compute $\overline{\textbf{e}}_2=M^{-1}\sum_{m=1}^{M}\widetilde{\textbf{e}}^{m}_2$;
				\STATE Compute $\widetilde{\mathbf{E}}_2=(	(\widetilde{\textbf{e}}^{1}_2-\overline{\textbf{e}}_2)^\dT,(\widetilde{\textbf{e}}^{2}_2-\overline{\textbf{e}}_2)^\dT,\ldots,(\widetilde{\textbf{e}}^{M}_2-\overline{\textbf{e}}_2)^\dT)^\dT$.
				\STATE Compute $\widetilde{\mathbf{E}}=
				\widetilde{\mathbf{E}}_1+\widetilde{\mathbf{E}}_2$.
				\STATE \textbf{Output:}  
				$\mathbf{\widetilde{Z}}^{1:M}=
				\mathbf{P}\textbf{Z}+\widetilde{\textbf{E}}$.
			\end{algorithmic}
	\end{algorithm}}
	
		\subsection{Evaluation of Computational Efficiency}\label{section:time}

	To empirically evaluate the computational efficiency of our method in generating knockoff copies of $Z$-scores and performing inference, we conduct experiment to compare our method with derandomized knockoffs (where $M=1$, \citealp{zhimei}) and the trivial approach via (\ref{Knockoff_conditional_Z}) with Cholesky decomposition of $\textbf{V}$.  For different number of features $p =50, 100, 200$ and $500$, $200$ datasets are generated and inferred. Within each dataset, $n=1000$ i.i.d. samples $\{(\textbf{x}_i,y_i):i=1,\ldots,n\}$ are generated from the linear model
	\begin{equation} \label{eq:gen}
		Y=  \boldsymbol{\beta}^\dT\mathbf{X} +\epsilon, \quad \epsilon \sim N(0,1),
	\end{equation}
	where $p$-dimensional feature vector $\mathbf{X} \sim \text{N}\big(0,\mathbf{\Sigma}=(0.25^{|i-j|})_{p\times p}\big)$. With a fixed number of nonnull features $|\mathcal{H}_1|=10$, locations of nonnull features are randomly drawn from $\{1,\ldots,p\}$ with $\beta_j$'s uniformly distributed in $\{5/\sqrt{n},-5/\sqrt{n}\}$ if $j\in \mathcal{H}_1$ and $\beta_j=0$ for null features. 
	With the target FWER level $0.05$, the hyperparameter of derandomized knockoffs are selected as the number of repetitions $M_{deran}=50$ using codes of \citet{zhimei}.
	
	
	Specifically, we compare computation time of three approaches in 
	\begin{itemize}
		\item[(a)] Cholesky decomposition (derandomized knockoffs: $2 \mathbf{D}-\mathbf{D} \mathbf{\Sigma}^{-\mathbf{1}} \mathbf{D}$; trivial approach: $\textbf{V}$; our method: $\textbf{C}-\frac{M-1}{M}\textbf{D}$);
		\item[(b)] Sampling knockoff copies of $Z$-scores, including sampling $\widetilde{\textbf{E}}$ and computing $\mathbf{\widetilde{Z}}^{1:M}=
		\mathbf{P}\textbf{Z}+\widetilde{\textbf{E}}$ ($M_{\text{deran}}$ repetitions for derandomized knockoffs with $M=1$);
		\item[(c)] Inference (Algorithm \ref{FWER_filter} for both trivial approach and our method; $M_{\text{deran}}$ repetitions for derandomized knockoffs).
	\end{itemize}

	Here, we omit the comparison in performing precalculation of $\textbf{D}$ and $\textbf{I}-\textbf{D}\boldsymbol{\Sigma}^{-1}$ because the total complexity of this step is the same for different methods.
	Average computational time of different approaches under different dimensions $p$ is summarized in Table \ref{Table:time} (CPU: Apple M2 Pro, 3.5 GHz; RAM: 16GB). Generally, our method has the least computational time. Compared to derandomized knockoffs, our method has comparable computational complexity in Cholesky decomposition and outperforms in sampling knockoff copies and inference. The reason is that $M_{\text{deran}}$ samplings of $p$-dimensional random vector and $M_{\text{deran}}$ inferences are needed in derandomized knockoffs, leading to $\Omega(M_{\text{deran}} p^2)$ and $\Omega(M_{\text{deran}} p\log p)$ computational costs respectively. Compared to the trivial approach, our method do relieve the computational burden of Cholesky decomposition from $O(M^3p^3)$ to $O(p^3)$. As a result, the overall computational complexity of Algorithm \ref{alg:eff} is the smallest.
	

	\begin{table}[t]
		\centering
		\caption{Average computational times (s) of derandomized knockoffs (with $M=1$ and $M_{deran}=50$), the trivial approach and our method (with $M=19$) in Cholesky decomposition, sampling knockoff copies  and inference
			with sample size $n=1000$ (CPU: Apple M2 Pro, 3.5 GHz; RAM: 16GB). }\label{Table:time}
		\begin{tabular}{llrrrr}
			\toprule
			\multicolumn{2}{c}{$p$}&$50$&$100$&$200$&$500$\\
			\midrule
			\multirow{5}{*}{{Derandomized knockoffs}}						
			& Cholesky decomposition & \textbf{0.00145} & 0.00170 & 0.00291 & \textbf{0.01603}\\ 
			&Sampling knockoff copies& 0.03308 & 0.03339 & 0.03652 & 0.05816 \\ 
			&Inference & 0.07840 & 0.13895 & 0.26577 & 0.61804 \\ 
			\cline{2-6}
			&Total time& 0.11293& 0.17404 &0.30520 &0.69223  \\ 
			\midrule
			\multirow{4}{*}{{Trivial approach}}
 		    	&Cholesky decomposition&  0.16140 &  0.66588 &  4.46985 & 82.07046 \\ 
  		    	&Sampling knockoff copies&  0.00083 &  0.00277 &  0.00916 &  0.06004 \\ 
  		        &Inference &  0.00168 &  0.00340 &  0.00517 &  \textbf{0.01252 }\\ 
			\cline{2-6}
			&{Total time}&0.16391 & 0.67205 &  4.48418 & 82.14302 \\ 
%
%
			\midrule
			
			\multirow{4}{*}{{Our method}}
			  & Cholesky decomposition & \textbf{0.00145} & \textbf{0.00168} & \textbf{0.00288} & 0.01696 \\ 
			  &Sampling knockoff copies& \textbf{0.00073} & \textbf{0.00074} & \textbf{0.00091} & \textbf{0.00165} \\ 
			  &Inference & \textbf{0.00162} & \textbf{0.00279} & \textbf{0.00516} & 0.01263 \\ 
				\cline{2-6}
			&{Total time}&\textbf{0.00380}& \textbf{0.00521}& \textbf{0.00895} &\textbf{0.03124} \\ 
			\bottomrule
		\end{tabular}
	\end{table}


	%
	%
	
	\section{Experiments} \label{section4}
	
	We conduct extensive experiments on synthetic data (i) to evaluate the performance of the proposed FWER filter with GhostKnockoff in both FWER control and power and (ii) to compare with existing knockoff-based methods that also control FWER.
	 With the target FWER level $\alpha=0.05$,  we set the number of knockoff copies $M=19$ and the hyperparameter $v=1$ for our method.  For comparison, we implement the procedure of \citet{janson} and derandomized knockoffs \citep{zhimei}, which repeatedly implement the procedure of \citet{janson} (with $M=1$) for $M_{deran}$ times and return the final selection set as
	\begin{equation}
		\label{selection_freq}
	{\mathcal{R}}=\{j|\Pi_j\geq \eta\},\quad \text{where }\Pi_j= \frac{1}{M_{deran}}\sum_{m=1}^{M_{deran}} \mathbf{I}\{j \in {\mathcal{R}}^m\},
	\end{equation} with rejection sets $\{{\mathcal{R}}^m|m=1,\ldots ,M_{deran}\}$ obtained in each repetition. We choose $M_{deran}=50$ and $\eta=0.99$ using codes of \citet{zhimei}.
	%
	%
	For all methods, we use the SDP construction (\ref{SDP})  of $\textbf{D}$ and $Z$-scores in (\ref{zscore}) as feature importance scores.

	\subsection{Correlation Structure}\label{section:structure}

	To investigate how the proposed FWER filter with GhostKnockoff performs under different correlation structures with comparison to existing knockoff-based methods, we conduct experiments of $500$ randomly generated datasets under both the compound symmetry correlation structure and the $\text{AR}(1)$ correlation structure of features. For each dataset, $n=500$ observations are generated by drawing i.i.d. samples of $100$-dimensional features $\textbf{x}_1,\ldots,\textbf{x}_n\sim \text{N}(0,\mathbf{\Sigma})$, where  
	\begin{equation}\label{eq:correlation}
		\boldsymbol{\Sigma}=(\sigma_{ij})_{p \times p} \quad \text { where } \quad \sigma_{ij}= \begin{cases}\rho^{I(i \neq j)}, & \text { (compound symmetry structure); }\\\rho^{|i-j|}, & \text { ({AR}(1) structure) ,} \end{cases}
	\end{equation}
	and simulating responses $y_1,\ldots,y_n$ from the linear model (\ref{eq:gen}). With a fixed number of nonzero elements $|\mathcal{H}_1| =5$ in the parameter vector $\boldsymbol{\beta}=(\beta_1,\ldots,\beta_p)^\dT$, locations of these nonzero elements are uniformly sampled from $\{1,\ldots,p\}$ without replacement while their values are i.i.d. uniform variables in $\{A/\sqrt{n},-A/\sqrt{n}\}$. To exhibit the complete power curve, we let the amplitude $A$ of signals range in  $\{1,2,\ldots,20\}$ under compound symmetry structure and $\{1,2,\ldots,25\}$ under $\text{AR}(1)$ structure respectively.
	
		\begin{figure}
		\begin{subfigure}[b]{0.49\textwidth}
			\centering
			\includegraphics[width=\textwidth]{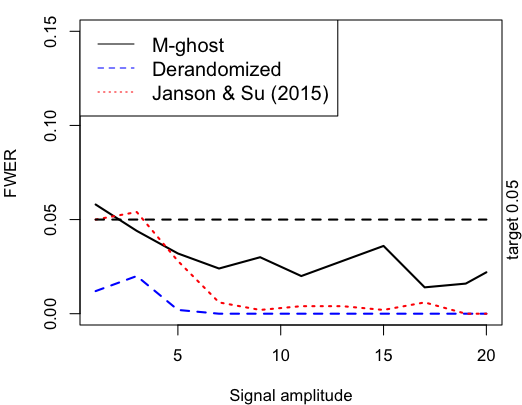}
		\end{subfigure}
		\begin{subfigure}[b]{0.49\textwidth}
			\centering
			\includegraphics[width=\textwidth]{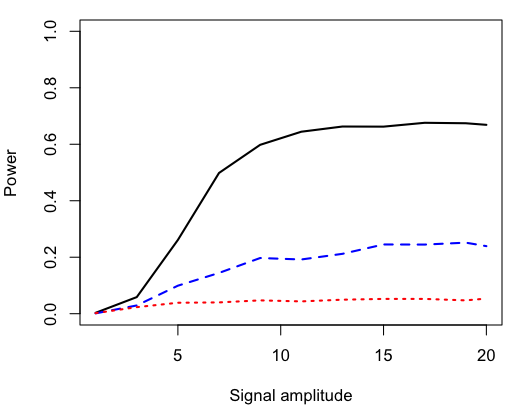}
		\end{subfigure}
		\begin{subfigure}[b]{0.49\textwidth}
		\centering
		\includegraphics[width=\textwidth]{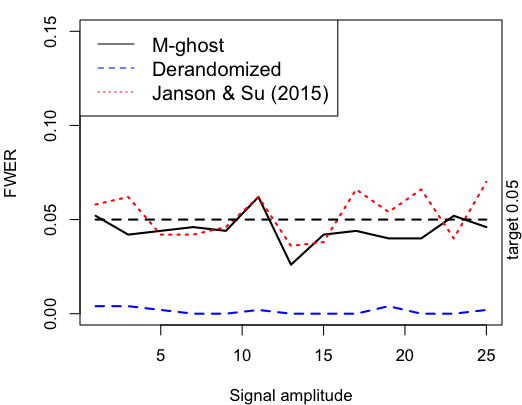}
	\end{subfigure}
\
	\begin{subfigure}[b]{0.49\textwidth}
		\centering
		\includegraphics[width=\textwidth]{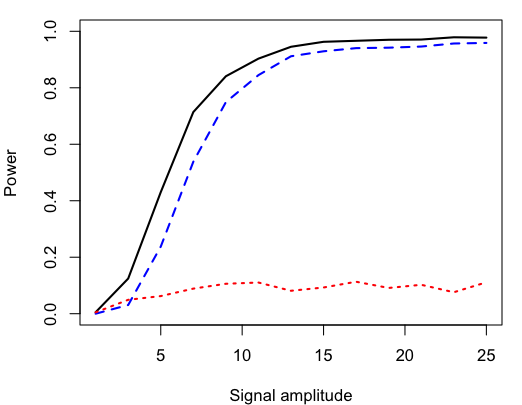}
	\end{subfigure}
		\caption{Empirical FWER and power over $500$ simulated datasets of the proposed FWER filter with GhostKnockoff, derandomized knockoffs and the procedure of \citet{janson} with respect to different signal amplitudes $A$ under compound symmetry (top panel) and AR(1) (bottom panel) correlation structure, sample size $n=500$, dimension $p=100$, correlation strength $\rho=0.5$ and the number of nonnull features $|\mathcal{H}_1|=5$.}
		\label{fig:generic_cs}
	\end{figure}

	With the correlation coefficient $\rho=0.5$ and the target FWER level $0.05$, average power of different methods over $500$ datasets under compound symmetry structure and $\text{AR}(1)$ structure is presented in Figure \ref{fig:generic_cs}. With FWER controlled under the target level, the power of the proposed FWER filter with GhostKnockoff and derandomized knockoffs increase as the signal amplitude $A$ grows under both structures, while the procedure of \citet{janson} maintains a low power. The reason is that the procedure of \citet{janson} only makes rejection with probability $2\times0.05=0.1$ as shown in Section \ref{process}, making the power bounded by $0.1$. Although the proposed FWER filter with GhostKnockoff dominates derandomized knockoffs in power under both structures, its power can only reach $1$ as $A$ grows under $\text{AR}(1)$ structure and converges to a limit smaller than 1 under the compound symmetry structure. More results under different dimensions and correlation strengths are also provided in Appendix \ref{Supp41} with similar conclusions.


	\subsection{Correlation Strength}\label{section:correlation}
	
	To further investigate how correlation strength affects the performance of the proposed FWER filter with GhostKnockoff, we generate $500$ datasets for each possible correlation strength $\rho \in \{0,0.1,\ldots,0.9\}$ under both compound symmetry and $\text{AR}(1)$ structure. With a fixed sample size $n=500$, $100$-dimensional features are sampled as i.i.d. observations $\textbf{x}_1,\ldots,\textbf{x}_n\sim \text{N}(0,\mathbf{\Sigma})$ and responses $y_1,\ldots,y_n$ are computed from the linear model (\ref{eq:gen}). With a fixed number of nonzero elements $|\mathcal{H}_1| =5$ in the parameter vector $\boldsymbol{\beta}=(\beta_1,\ldots,\beta_p)^\dT$, locations of these nonzero elements are obtained by uniformly sampling of $\{1,\ldots,p\}$ without replacement while their values are i.i.d. uniform variables in $\{A/\sqrt{n},-A/\sqrt{n}\}$. The amplitude $A$ is $6$ and $8$ under compound symmetry and  $\text{AR}(1)$ structure respectively to completely show how power varies with respect to correlation strength. To avoid the power loss phenomenon when the SDP construction of  $\mathbf{D}$ is used under the compound symmetry structure with $\rho\geq 0.5$ as discovered in \citet{Arther}, we adopt their strategy by multiplying $\mathbf{D}$ with a perturbation factor $\gamma=0.9$.
	
		\begin{figure}
		\begin{subfigure}[b]{0.49\textwidth}
			\centering
			\includegraphics[width=\textwidth]{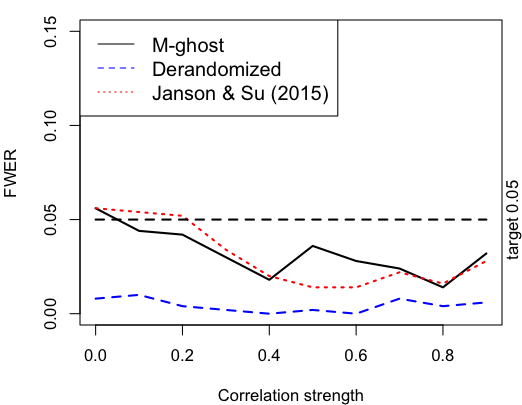}
		\end{subfigure}
		\begin{subfigure}[b]{0.49\textwidth}
			\centering
			\includegraphics[width=\textwidth]{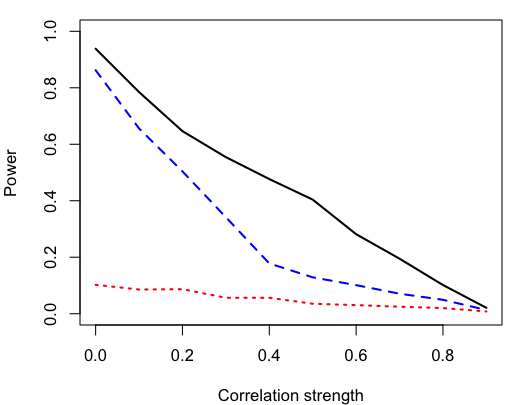}
		\end{subfigure}
			\begin{subfigure}[b]{0.49\textwidth}
		\centering
		\includegraphics[width=\textwidth]{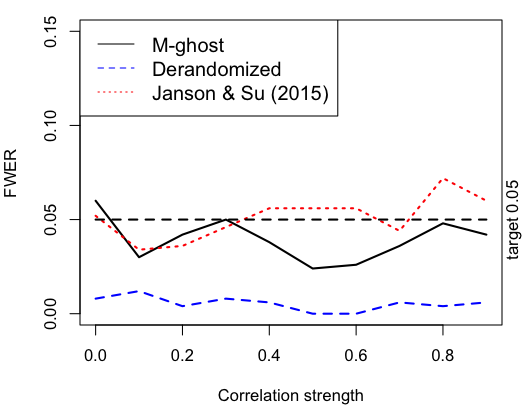}
	\end{subfigure}
\
	\begin{subfigure}[b]{0.49\textwidth}
		\centering
		\includegraphics[width=\textwidth]{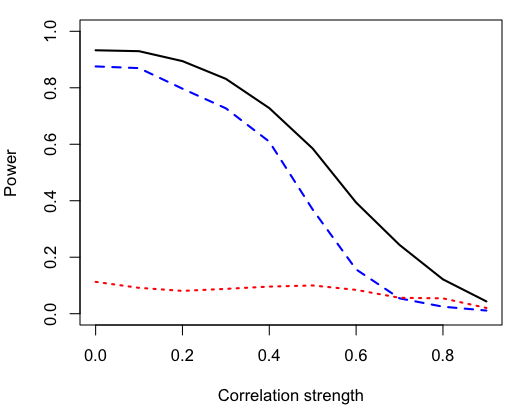}
	\end{subfigure}
		\caption{
			Empirical FWER and power over $500$ simulated datasets of the proposed FWER filter with GhostKnockoff, derandomized knockoffs and the procedure of \citet{janson} with respect to different correlation strengths $\rho$ under compound symmetry (top panel with $A=6$) and  AR(1) (bottom panel with $A=8$) correlation structure, sample size $n=500$, dimension $p=100$ and the number of nonnull features $|\mathcal{H}_1|=5$.}
		\label{fig:cs_cr}
	\end{figure}
	Empirical FWER and power of the proposed FWER filter with GhostKnockoff and other existing approaches are visualized in Figure \ref{fig:cs_cr}. Consistent with our theoretical derivation, our method can control the FWER under the target level $\alpha=0.05$. While the procedure of \citet{janson} remains powerless, both the proposed FWER filter with GhostKnockoff and  knockoffs have a decreasing power as the correlation strength $\rho$ grows. The main reason is that when $\rho$ gets larger, nearby features become more similar, making it more difficult to distinguish true signals. Even though, our method still outperforms  derandomized knockoffs, whose power has a sudden drop when $\rho\geq0.5$ due to the perturbation factor $\gamma$.

	
	\subsection{Number of Nonnull Features and Sample Size}
	
	\begin{figure}
		\begin{subfigure}[b]{0.49\textwidth}
	\centering
	\includegraphics[width=\textwidth]{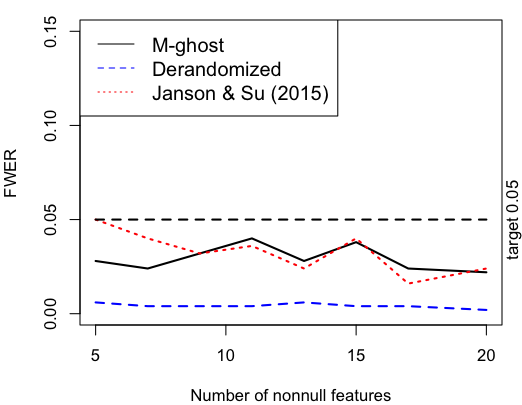}
\end{subfigure}
\begin{subfigure}[b]{0.49\textwidth}
	\centering
	\includegraphics[width=\textwidth]{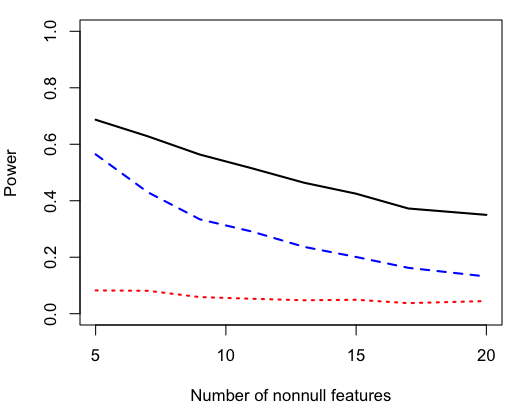}
\end{subfigure}
\begin{subfigure}[b]{0.49\textwidth}
	\centering
	\includegraphics[width=\textwidth]{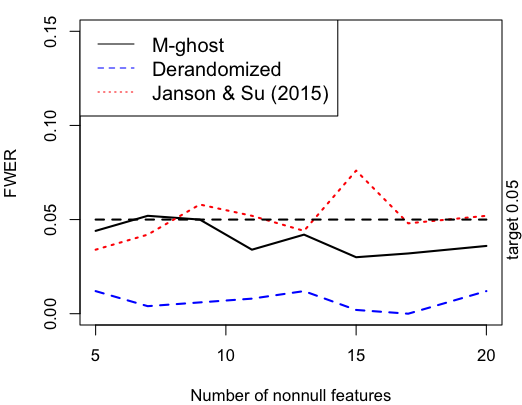}
\end{subfigure}
\
\begin{subfigure}[b]{0.49\textwidth}
	\centering
	\includegraphics[width=\textwidth]{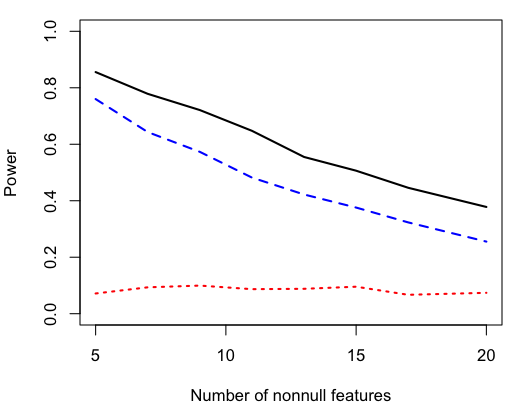}
\end{subfigure}
		\caption{Empirical FWER and power over $500$ simulated datasets of the proposed FWER filter with GhostKnockoff, derandomized knockoffs and the procedure of \citet{janson} with respect to different numbers of nonnull features $|\mathcal{H}_1|$ under compound symmetry (top panel with $A=6$) and  AR(1) (bottom panel with $A=8$) correlation structure, sample size $n=500$, dimension $p=200$ and the  correlation strength $\rho=0.25$.}
		\label{fig:ar_k}
	\end{figure}
	
	To demonstrate the empirical performance of the proposed FWER filter with GhostKnockoff under different number of nonnull features, we generate 500 datasets for each possible number of nonnull features $|\mathcal{H}_1| =5,6,\ldots,20$ under both compound symmetry (amplitude $A=6$) and AR(1) (amplitude $A=8$) structures with sample size $n=500$, dimension $p=200$ and correlation strength $\rho=0.25$. Empirical FWER and power of the proposed FWER filter with GhostKnockoff under different $|\mathcal{H}_1|$ are visualized in Figures \ref{fig:ar_k}, with comparison to derandomized knockoffs and the procedure of \citet{janson}. Similarly, we also simulate 500 datasets for sample sizes $n=100,200,500$ and $1000$ under both compound symmetry (amplitude $A=6$) and AR(1) (amplitude $A=8$) structures with $p=200$, $|\mathcal{H}_1| =5$ and $\rho=0.25$. Empirical FWER and power of the proposed FWER filter with GhostKnockoff and existing methods under different sample sizes are shown in Figures \ref{fig:cs_n}.
	
	
		\begin{figure}
		\begin{subfigure}[b]{0.49\textwidth}
	\centering
	\includegraphics[width=\textwidth]{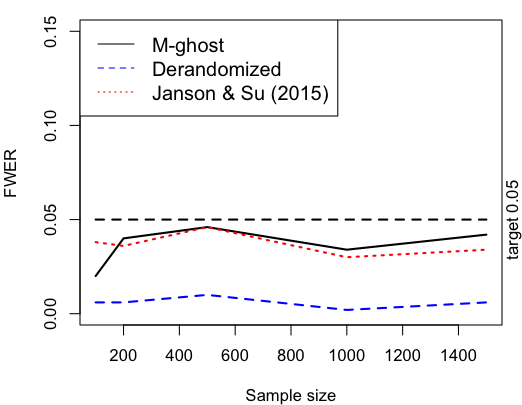}
\end{subfigure}
\begin{subfigure}[b]{0.49\textwidth}
	\centering
	\includegraphics[width=\textwidth]{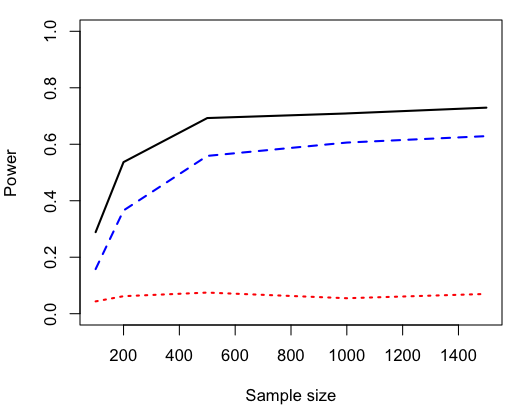}
\end{subfigure}
\begin{subfigure}[b]{0.49\textwidth}
	\centering
	\includegraphics[width=\textwidth]{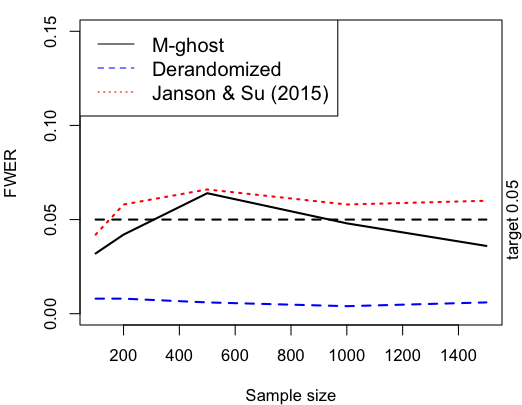}
\end{subfigure}
\
\begin{subfigure}[b]{0.49\textwidth}
	\centering
	\includegraphics[width=\textwidth]{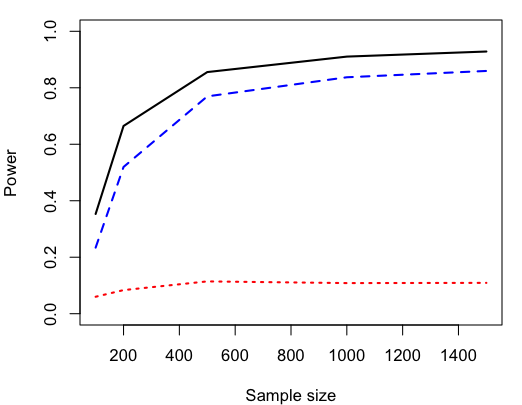}
\end{subfigure}
		\caption{Empirical FWER and power over $500$ simulated datasets of the proposed FWER filter with GhostKnockoff, derandomized knockoffs and the procedure of \citet{janson} with respect to different sample sizes $n$ under compound symmetry (top panel with $A=6$) and  AR(1) (bottom panel with $A=8$) correlation structure, dimension $p=200$, the correlation strength $\rho=0.25$ and numbers of nonnull features $|\mathcal{H}_1|=5$.}
		\label{fig:cs_n}
	\end{figure}
	
	From Figure \ref{fig:ar_k}, it is found that as the number of nonnull features $|\mathcal{H}_1| $ grows, the power of both the proposed FWER filter with GhostKnockoff and derandomized knockoffs decreases, suggesting that their ability to detect nonnull features deteriorates under dense but weak signals. As the sample size $n$ increases, the power of both methods grows to $1$ as shown in Figure \ref{fig:cs_n}, implying their consistency. In contrast, the procedure of \citet{janson} keeps a bounded power under $10\%$ throughout all scenarios. More results under different correlation strengths are provided in Appendix \ref{Supp43} with similar conclusions.

	The main reason why the power of the proposed method decreases with respect to $|\mathcal{H}_1| $ and increases with respect to $n$ is that the signal-to-noise ratio is of the order $\Omega(\sqrt{n/|\mathcal{H}_1|})$. This can be seen in the following toy example. Consider i.i.d. samples $\textbf{x}_1,\ldots,\textbf{x}_n\sim \text{N}(\textbf{0},\textbf{I})$ and responses $y_1,\ldots,y_n$ generated from the linear model (\ref{eq:gen}), where $\beta_1=\ldots=\beta_k=A/\sqrt{n}$ and $\beta_{k+1}=\cdots=\beta_p=0$ for some integer $k$. In other words, we have $|\mathcal{H}_1| =k$. By definition (\ref{zscore}) of $Z$-scores after standardizing both features and the response, we have $Z_j=\sqrt{n}\hat{\rho}_j$ where $\hat{\rho}_j$ is the empirical correlation of $X_j$ and $Y$. By (\ref{eq:gen}), we have $\text{Var}(X_j)=1$ ($j=1,\ldots,p$), $\text{Cov}(X_j,Y)=\beta_j$ and $\text{Var}(Y)=\sum_{j=1}^p\beta_j^2+1$, leading to true correlations ${\rho}_j$ as
	$$\rho_j=	\frac{\text{Cov}(X_j,Y)}{\sqrt{\text{Var}(X_j)\text{Var}(Y)}}=\begin{cases}
		\frac{A/\sqrt{n}}{\sqrt{kA^2/n+1}},& j=1,\ldots,k;\\
		0,& \text{otherwise}.
	\end{cases}$$
	By the central limit theorem of correlation coefficient \citep{Lehmann1999} that 
	$$\sqrt{n}(\hat{\rho}_j-\rho_j)\mathop{\longrightarrow}^{\text{D}}\text{N}(0,(1-\rho_{j}^2)^2),\quad \text{where }\mathop{\longrightarrow}^{\text{D}}\text{ stands for convergence in distribution,}$$
	we have that as the amplitude $A$ grows, the expected mean of $Z_j$ for nonnull features is 
	$\sqrt{n/k}\{1+o(1)\}$. Since the mean and asymptotic variance of $Z_j$ corresponding to null features remain $0$ and $1$, the signal-to-noise ratio is $\Omega(\sqrt{n/k})$. Thus, smaller $|\mathcal{H}_1|$ and larger sample size would both lead to larger signal-to-noise ratio and higher power.

	\section{Real Data Analysis}\label{section5}
	
	To investigate the empirical performance of the proposed FWER filter with GhostKnockoff in detecting genetic variants associated with the Alzheimer’s disease (AD), we apply it on the aggregated summary statistics of genetic variants from nine overlapping large-scale array-based genome-wide association studies, and whole-exome/-genome sequencing studies on individuals with European ancestry as summarized in Table \ref{Data_info}. Since AD is believed to occur when an abnormal amount of amyloid beta (a type of plasma protein) accumulates in brains \citep{Tackenberg2020}, we extract $Z$-scores of $7,963$ variants in protein quantitative trait loci (pQTL) that are reported  in \citet{Ferkingstad2021} to have significant association with the level of at least one plasma protein. For comparison, we also apply derandomized knockoffs \citep{zhimei} on the same datasets.

		\begin{table}[t]
		\centering
		\caption{Information of selected large-scale array-based genome-wide association studies, and whole-exome/-genome sequencing studies.}\label{Data_info}
		\resizebox{\columnwidth}{!}{
			\begin{tabular}{lrr}
				\toprule
				\multirow{2}{*}{Studies}&\multicolumn{2}{c}{Sample Size}\\
				\cline{2-3}
				&AD case samples&Control Sample\\
				\midrule
				The genome-wide survival association study \citep{Huang2017}&14,406&25,849\\
				The genome-wide meta-analysis of clinically diagnosed AD and AD-by-proxy \citep{Jansen2019}&71,880&383,378\\
				The genome-wide meta-analysis of clinically diagnosed AD \citep{Kunkle2019}&21,982& 41,944\\
				The genome-wide meta-analysis by \citet{Schwartzentruber2021}&75,024&397,844\\
				In-house genome-wide associations study \citep{Belloy2022}&15,209&14,452\\
				Whole-exome sequencing analyses of data from ADSP by \citet{Bis2020}&5,740&5,096\\
				Whole-exome sequencing analyses of data from ADSP by \citet{Guen2021}&6,008&5,119\\
				In-house whole-exome sequencing analysis of ADSP&6,155&5,418\\
				In-house whole-genome sequencing analysis of the 2021 ADSP release \citep{Belloy2022a}&3,584&2,949\\
				\bottomrule
				\multicolumn{3}{l}{\footnotesize ADSP: The Alzheimer’s Disease Sequencing Project}
			\end{tabular}
			
		}
	\end{table}

\begin{figure}[t]
	\centering
	\includegraphics[width=0.8\linewidth]{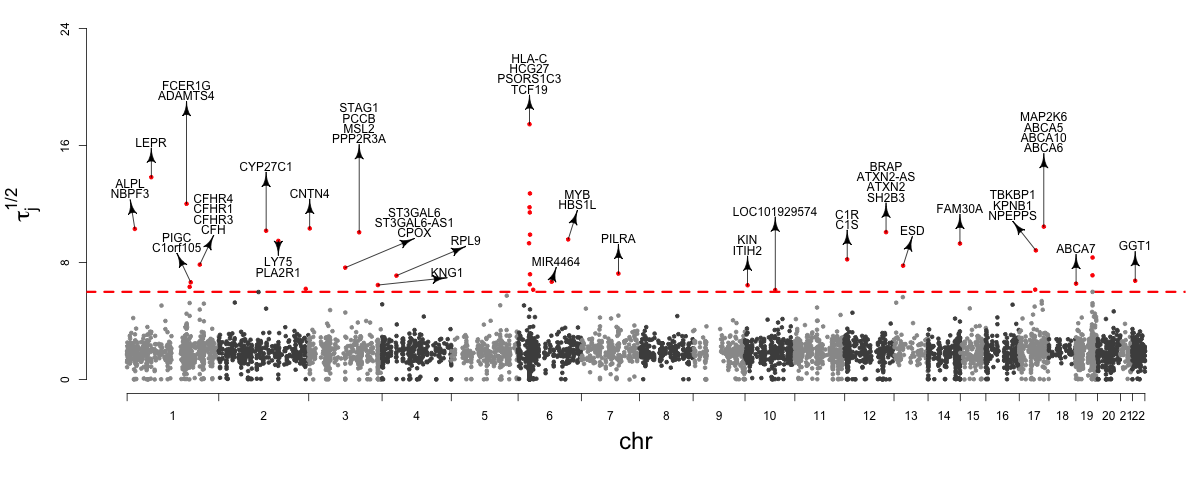}
	\includegraphics[width=0.8\linewidth]{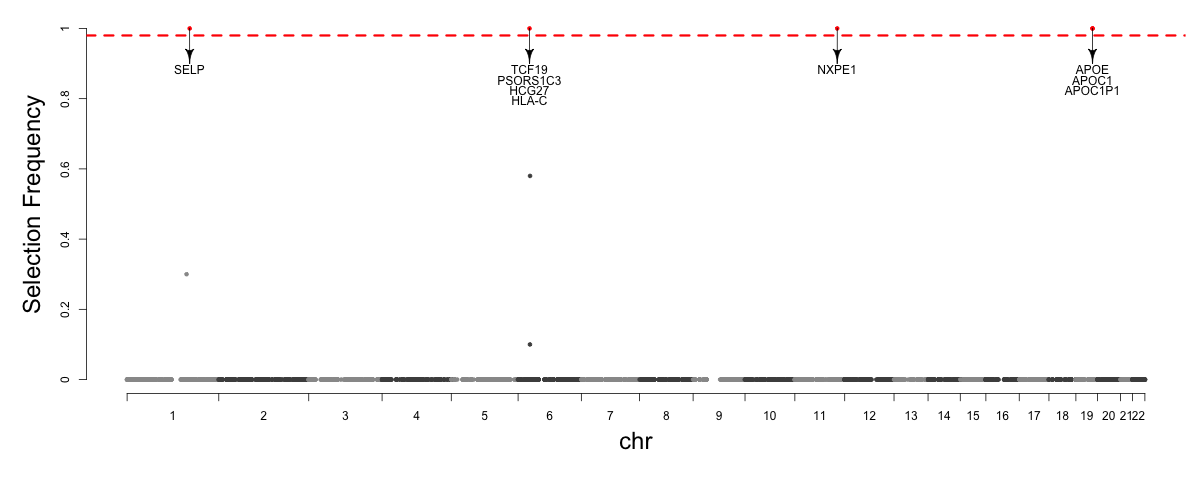}
	\caption{(Top) Manhattan plot of $\sqrt{\tau_l}$'s (truncated at 24 for better visualization) from the proposed FWER filter with GhostKnockoff with $M=19$. (Three variant groups are omitted with $\sqrt{\tau_l}$'s greatly larger than $24$, including group \texttt{19:45416178}, group \texttt{11:114408449} and group \texttt{19:45425178}). The selected variant groups under the target FWER level $\alpha=0.05$ are highlighted in red with information of close genes. (Bottom) Manhattan plot of selection frequency from derandomized knockoffs with $M_{deran}=19$. The selected variant groups under the target FWER level $\alpha=0.05$ are highlighted in red with information of close genes.}
	\label{fig:manhattonghostgrp}
\end{figure}
	
	Corresponding to all $7,963$ variants, we first estimate their covariance matrix $\boldsymbol{\Sigma}$ using genotypes from the UK Biobank data. Although we can apply the proposed FWER filter with GhostKnockoff on $Z$-scores of individual variants, directly doing so is likely to miss most of AD-associated variants. The reason is that a null variant strongly correlated to a nonnull variant is highly possible to have a large $\tau_j$ and a nonzero $\kappa_j$, making steps 5-8 of Algorithm \ref{FWER_filter} terminate at a small $T$. To circumvent such an obstacle, we apply the hierarchical clustering algorithm to build variant groups and aggregate effects of variants. Specifically, we use $(1-|\widehat{\text{cor}}(X_j,X_l)|)_{7963\times 7963}$ as the input distance matrix of variants, the ``single linkage" criterion and the cutoff distance $0.25$, leading to $5,886$ variant groups in total such that pairwise correlations between variants of different groups are within the interval $[-0.75,0.75]$.  Based on such a group structure, we aggregate the effect of all variants in the $l$-th group ($l=1,\ldots,5886$) by computing the chi-square test statistic as
	$$\chi_l=\textbf{Z}^\dT_{l\text{-th group}}\boldsymbol{\Sigma}^{-1}_{l\text{-th group}}\textbf{Z}_{l\text{-th group}},$$
	where $\textbf{Z}_{l\text{-th group}}$ (or $\boldsymbol{\Sigma}_{l\text{-th group}}$) is the $Z$-score vector  (or the covariance matrix) corresponding to variants in the $l$-th group. Analogous to $Z_j^2$ which approximates $n$ times the $R$-squared of the linear model that regresses $Y$ on the $j$-th variant, $\chi_l$ approximates $n$ times the $R$-squared of the linear model that regresses $Y$ on variants in the $l$-th group. Under the definition that groups with at least one nonnull variants are nonnull, we apply  the proposed FWER filter with GhostKnockoff to estimate nonnull groups as follows. 
	 \begin{itemize}
	 	\item[$\diamond$] Generate knockoff copies of $Z$-scores, $\widetilde{\textbf{Z}}^{1:M}$, via Algorithm  \ref{alg:eff}. Here, $\textbf{D}=\text{diag}(\textbf{S}_1,\ldots,\textbf{S}_{5886})$ is a block diagonal matrix with respect to variant groups obtained by solving the optimization problem,
	 	$$\min \sum_{l=1}^{5886}p_l^{-2}\|\textbf{S}_l-\boldsymbol{\Sigma}_l\|_{1,1}\quad \quad\text { s.t. }\begin{cases}
	 		\frac{M+1}{M} \mathbf{\Sigma-D} \succeq 0, \\
	 		\textbf{S}_l=\boldsymbol{\Lambda}_l+\sum_{k=1}^{p_l}\delta_{lk} \textbf{v}_{lk}\textbf{v}_{lk}^\dT,\quad 1 \leq l \leq 5886,\end{cases}$$
	 	where $\|\cdot\|_{1,1}$ is the $L_{1,1}$ matrix norm and for $l=1,\ldots,5886$, \begin{itemize}
	 		\item[-]  $p_l$ and $\boldsymbol{\Sigma}_l$ are the size and the correlation matrix of the $l$-th variant group respectively,
	 		 \item[-] $\boldsymbol{\Lambda}_l\succeq 0$ is a diagonal matrix,
	 		 \item[-] $\textbf{v}_{l1},\ldots,\textbf{v}_{lp_l}$ are eigenvectors of $\boldsymbol{\Sigma}_l$,
	 		 \item[-] $\delta_{l1},\ldots,\delta_{lp_l}$ are nonnegative;
	 	\end{itemize}
	\item[$\diamond$]  Compute knockoff copies of chi-square test statistics $\{\chi_l^m|l=1,\ldots,5886;m=1,\ldots,M\}$ via
	 	$$\chi^m_l=(\widetilde{\textbf{Z}}^m_{l\text{-th group}})^\dT\boldsymbol{\Sigma}^{-1}_{l\text{-th group}}\widetilde{\textbf{Z}}^m_{l\text{-th group}};$$
	 	\item[$\diamond$]  Compute the knockoff statistic of the $l$-th variant group as
	 	\begin{equation}
	 		\kappa_l=\underset{0 \leq m \leq M}{\arg \max } \text{ } (\chi_l^m)^2,\quad \tau_l=(\chi_l^{\kappa_l})^2-\underset{m\neq \kappa_l}{\operatorname{median }} (\chi_l^m)^2,\quad (l=1,\ldots,5886).
	 	\end{equation}
	 	\item[$\diamond$]  Apply Algorithm \ref{FWER_filter} on $\{(\kappa_l,\tau_l)|l=1,\ldots,5886\}$ to obtain the rejection set of variant groups.
	 \end{itemize}
 	Analogously, derandomized knockoffs \citep{zhimei} are also implemented on SDP-constructed group knockoff copies with $M_{deran}=50$ and $\eta=0.99$ using codes of \citet{zhimei}.
 	
 	With the number of knockoff copies $M=19$ and the target FWER level $\alpha=0.05$, the proposed FWER filter with GhostKnockoff manages to reject 42 variant groups as shown in top panel of Figure \ref{fig:manhattonghostgrp} and Table \ref{Real_table}. Similar to the literature, several groups (groups \texttt{19:45416178}, \texttt{19:45425178}, \texttt{19:45505803} and \texttt{19:45322744}) are rejected in the APOE/APOC region with the strongest association to AD.  In addition, the proposed FWER filter with GhostKnockoff can detect nonnull groups within 1Mb distance to genes ``ADAMTS4" (groups \texttt{1:161152778} and \texttt{1:169529132}), ``FAM20B" (group \texttt{1:172412995}), ``BIN1" (group \texttt{2:127882182}), ``HLA-DQA1" (groups \texttt{6:31195793}, \texttt{6:32202086}, \texttt{6:31229796}, \texttt{6:31900657} and \texttt{6:32623367}) and ``ABCA7" (group \texttt{19:1051137}), which are also reported in the work of \citet{he2022ghostknockoff}. In contrast, only 5 groups are rejected by derandomized knockoffs ($M_{deran}=50$ and $\eta=0.98$). As shown in bottom panel of Figure \ref{fig:manhattonghostgrp}, with only 8 variant groups have nonzero selection frequency and 5 variants have selection frequency $1$, derandomized knockoffs can not detect any variant group near genes ``BIN1" and  ``HLA-DQA1", which are of strong significance in  \citet{he2022ghostknockoff}. This is mainly due to the bounded power of the procedure of \citet{janson} underlying the derandomized procedure as shown in Section \ref{section4}, suggesting the advantage of the proposed FWER filter with GhostKnockoff.

	\renewcommand{\arraystretch}{0.7}
		\begin{table}
		\centering
		\caption{Details of variant groups rejected by the proposed FWER filter with GhostKnockoff ($M=19$) under the target FWER level $\alpha=0.05$. Groups that are also rejected by derandomized knockoffs ($M_{deran}=50$ and $\eta=0.98$) are highlighted in bold. Here, groups are named in terms of "a:b" if its leading variant is at position b of chromosome a.}\label{Real_table}
		\resizebox{\columnwidth}{!}{\scriptsize
			\begin{tabular}{llp{0.4\textwidth}p{0.5\textwidth}}
				\toprule
				Group&Chromosome&Positions of variants&Close genes\\
				\midrule\midrule
				\textbf{19:45416178}&\textbf{19}&\textbf{45411941, 45413576, 45415713, 45416178, 45416741, 45422160}&\textbf{APOE, APOC1}\\
				\textbf{11:114408449}&\textbf{11}&\textbf{114405806, 114407012, 114407750, 114408449}&\textbf{NXPE1}\\
				\textbf{19:45425178}&\textbf{19}&\textbf{45412079, 45425178, 45426792}&\textbf{APOE, APOC1P1}\\
				\textbf{6:31195793}&\textbf{6}&\textbf{31121945, 31142265, 31149520, 31154633, 31195793, 31195996, 31197074, 31200816, 31206868, 31234693}&\textbf{TCF19, PSORS1C3, HCG27, HLA-C}\\
				1:66092570&1&66092570, 66109445&LEPR\\
				6:32202086&6&31914935, 32109165, 32202086, 32278635, 32416366, 32523008, 32560266&CFB, PRRT1, TSBP1-AS1, TSBP1, HLA-DRA, HLA-DRB6, HLA-DRB1\\
				1:161152778&1&161152778, 161186313&ADAMTS4, FCER1G\\
				6:31229796&6&31229796, 31239114&HLA-C\\
				6:31900657&6&31887259, 31891491, 31893257, 31900657, 31903804, 31909340, 31919830, 32156895&C2, C2-AS1, CFB, NELFE, GPSM3\\
				17:67276383&17&67081278, 67134806, 67136325, 67249711, 67276383, 67424508&ABCA6, ABCA10, ABCA5, MAP2K6\\
				3:2713965&3&2713965, 2716366&CNTN4\\
				1:21817085&1&21775943, 21806447, 21817085, 21820042, 21821897, 21822699&NBPF3, ALPL\\
				2:127882182&2&127882182&CYP27C1\\
				12:111932800&12&111865049, 111884608, 111904371, 111907431, 111932800, 112007756, 112059557&SH2B3, ATXN2, ATXN2-AS, BRAP\\
				3:135998453&3&135798658, 135800409, 135804550, 135833005, 135846911, 135925191, 135926622, 135932359, 135950921, 135965888, 135998453, 136003159, 136020541, 136027145, 136095035, 136162621&PPP2R3A, MSL2, PCCB, STAG1\\
				6:32623367&6&32590735, 32591310, 32615457, 32623367, 32626451, 32626537, 32632770&HLA-DQA1, HLA-DQB1, HLA-DQB1-AS1\\
				6:135427817&6&135402339, 135418632, 135418916, 135421176, 135422296, 135423209, 135426573, 135427159, 135427817, 135428537, 135432552, 135435501&HBS1L, MYB\\
				2:160821211&2&160718332, 160734382, 160760972, 160778946, 160799625, 160821211&LY75, PLA2R1\\
				6:29821340&6&29821340, 29822432, 29908415, 29921619, 29923136&HCP5B, HLA-A\\
				14:106387308&14&106358616, 106363591, 106369865, 106371016, 106383775, 106384722, 106387308, 106392575&FAM30A\\
				17:45766771&17&45571676, 45735706, 45737275, 45763073, 45766771&NPEPPS, KPNB1, TBKBP1\\
				19:45505803&19&45445517, 45505803, 45507542, 45522289&APOC4-APOC2, RELB\\
				12:7178019&12&7170336, 7171338, 7171507, 7172665, 7175872, 7176204, 7176978, 7178019, 7179822, 7181948&C1S, C1R\\
				1:196822368&1&196679682, 196681001, 196698945, 196721931, 196725939, 196815711, 196819479, 196821120, 196821380, 196822368, 196825287&CFH, CFHR3, CFHR1, CFHR4\\
				13:47351403&13&47351403, 47368296&ESD\\
				3:98406933&3&98359663, 98383562, 98406794, 98406933, 98416900, 98417481, 98428155, 98429219, 98431986, 98432559, 98443648, 98453951&CPOX, ST3GAL6-AS1, ST3GAL6\\
				7:99971834&7&99971834&PILRA\\
				6:32441696&6&32441696, 32584926&HLA-DRA, HLA-DQA1\\
				19:45322744&19&45322744&BCAM\\
				4:39446337&4&39446337, 39447604&RPL9\\
				22:24997309&22&24997309, 25002323&GGT1\\
				6:90935383&6&90935383, 90960875&MIR4464\\
				1:172412995&1&172400009, 172400860, 172412995, 172421744, 172425529&C1orf105, PIGC\\
				19:1051137&19&1051137&ABCA7\\
				6:31878108&6&31878108, 31878581&C2\\
				3:186449122&3&186445052, 186449122, 186450863&KNG1\\
				10:7769806&10&7749598, 7756187, 7764233, 7769806, 7770716, 7772035, 7774358, 7774728&ITIH2, KIN\\
				\textbf{1:169529132}&\textbf{1}&\textbf{169529132, 169549811, 169562904}&\textbf{SELP}\\
				2:234665983&2&234587848, 234664586, 234665782, 234665983, 234668570, 234673309&UGT1A7, LOC100286922, UGT1A1\\
				17:44200015&17&43849415, 43897026, 44200015&CRHR1, MAPT-AS1, KANSL1-AS1\\
				6:41129207&6&41129207&TREM2\\
				10:82267945&10&82267945, 82284512&LOC101929574\\
				\bottomrule
			\end{tabular}
		}
	\end{table}

	\section{Discussions}\label{section6}

	In this article, we develop a novel filter to simultaneously test multiple hypotheses of conditional independence with FWER control using GhostKnockoff. Under the Gaussian assumption of features, we follow the procedure of \citet{he2022ghostknockoff} to directly generate $M$ knockoff copies of summary statistics without using any individual data and propose a FWER filter with GhostKnockoff to estimate the set of nonnull features. By selecting features whose knockoff statistics $\kappa_j$'s are zero and $\tau_j$'s are larger than the maximum value of $\tau_j$ among features with nonzero $\kappa_j$'s, our method manages to circumvent the power bound in \citet{janson} when the target FWER level is smaller than $1/2$.  Furthermore, an elaborated algorithm is proposed to significantly reduce the computational cost of knockoff copies generation from $O(M^3p^3)$ to $O(p^3)$ without any loss in FWER control and power. Through extensive simulations on synthetic and real datasets, the proposed FWER filter with GhostKnockoff exhibits valid control on FWER and superior power in detecting nonnull features over existing methods, including derandomized knockoffs and the procedure of \citet{janson}.
	
	There are several directions available for further studies. As the proposed FWER filter with GhostKnockoff is an one-shot  procedure, its inference has inevitably large variation due to the uncertainty in sampling knockoff copies $\widetilde{\textbf{Z}}^{1:M}$. Thus, it is possible to incorporate the proposed method in the derandomized procedure of \citet{zhimei}. By substituting the procedure of \citet{janson} to be repeated in derandomized knockoffs, we can simultaneously achieve inference stability, high power and computation efficiency. For example, using the proposed FWER filter with GhostKnockoff with $M=5$ and $v=1$ as the based procedure, the derandomized procedure \citep{zhimei} can control FWER under $\alpha=0.05$ with smaller $M_{deran}=10$ and $\eta=0.89$ compared to the setting in Section \ref{section4} ($M_{deran}=50$ and $\eta=0.99$). As shown in  simulations, both the proposed FWER filter with GhostKnockoff and derandomized knockoffs are bounded in power when features are correlated. Such a phenomenon is mainly due to the mismatching between $Z$-scores, which generally depict marginal correlations, and the conditional independence (\ref{indep}) we target to test. Therefore, it is worth investigating how to develop feature importance scores to directly depict conditional correlations with only summary statistics. One possible solution is to use $\mathbf{\widetilde{Z}}^{0},\dots \mathbf{\widetilde{Z}}^{M}$ and matrices $\boldsymbol{\Sigma}$ and $\textbf{D}$ to obtain the approximate lasso estimator of the linear model 
		$Y=  \boldsymbol{\beta}^\dT\mathbf{X}+\sum_{m=1}^M(\widetilde{\boldsymbol{\beta}}^{m})^\dT\widetilde{\mathbf{X}}^{m} +\epsilon$
as feature importance scores. By doing so, magnitudes of feature importance scores corresponding to nonnull features would increase as the signal amplitude $A$ grows and keep unchanged as the number of nonnull features $|\mathcal{H}_1|$ increases, getting rid of the power bound phenomenon shown in Section \ref{section4}. 
	In addition, when the number of features $p$ is large as in genome-wide analyses, it is highly possible to have some null features with nonzero $\kappa_j$'s and large $\tau_j$'s, leading to early stop in nonnull features selection and  power loss. Thus, it is of great interest to incorporate some power boosting strategies to the proposed FWER filter with GhostKnockoff, including the feature screening technique \citep{Barber2019} and incorporating side information \citep{Ren2023}.

	\section*{Acknowledgements}

	We would like to thank Emmanuel Cand\`es for helpful discussions and useful comments about an early version of the manuscript. This research was additionally supported by NIH/NIA award AG066206 (Zihuai He) and AG066515 (Zihuai He).
	
	
%
	\vspace*{-8pt}

	\bibliographystyle{apalike} 
	\bibliography{biomsample}

	\begin{appendices}
		\section{Proof of Lemma 1}\label{proof:lemma1}
			\begin{proof}[Proof of Lemma \ref{lemma1}]
			Let $\widetilde{Z}_j^{m,\star}=\widetilde{Z}_j^{\sigma_j(m)}$ for $j=1,\ldots,p$ and $m=0,\ldots,M$, we have for any $\boldsymbol{\sigma}=\{\sigma_1,\ldots,\sigma_p\}$ where $\sigma_j$ is any permutation on $\{0,1,\ldots,M\}$ if $H_j$ is true and is the identity permutation if $H_j$ is false ($j=1,\ldots,p$), 
			\begin{equation}
				\begin{cases}
					\kappa_j^\star=\underset{0 \leq m \leq M}{\arg \max } \text{ } (\widetilde{Z}_j^{m,\star})^2=\sigma_j^{-1}(\kappa_j),&j\in \mathcal{H}_0,\\
					\kappa_j^\star=\underset{0 \leq m \leq M}{\arg \max } \text{ } (\widetilde{Z}_j^{m,\star})^2=\kappa_j,&j\in \mathcal{H}_1,\\
					\tau^\star_j=(\widetilde{Z}_j^{\kappa_j})^2-\underset{m\neq \kappa_j}{\operatorname{median }} (\widetilde{Z}_j^{m,\star})^2=\tau_j,&j=1,\ldots,p.\\
				\end{cases}
			\end{equation}
			Since $Z$-scores $\mathbf{\widetilde{Z}}^{0},\dots \mathbf{\widetilde{Z}}^{M}$ possess the extended exchangeability property (\ref{exchangeablility}) with respect to $\{H_j:j=1,\ldots,p\}$, we have 
			$$\bigg[\{\kappa_j|j\in \mathcal{H}_0\}\bigg|\{\kappa_j|j\notin \mathcal{H}_0\},\{\tau_j|j=1,\ldots,p\}\bigg]{\buildrel d \over =}\bigg[\{\sigma_j^{-1}(\kappa_j)|j\in \mathcal{H}_0\}\bigg|\{\kappa_j|j\notin \mathcal{H}_0\},\{\tau_j|j=1,\ldots,p\}\bigg].$$
			Because $\sigma_j$ can be any permutation for all $j\in \mathcal{H}_0$, $\sigma_j^{-1}$ can also be any permutation. Thus, $\{\kappa_j|j\in \mathcal{H}_0\}$ are independent uniform random variables on $\{0,1, \ldots, M\}$ conditional on $\{\kappa_j|j\notin \mathcal{H}_0\}$ and $\{\tau_j|j=1,\ldots,p\}$.
		\end{proof}
		
		\section{Proof of extended exchangeability property of $Z$-scores $\mathbf{\widetilde{Z}}^{0},\mathbf{\widetilde{Z}}^{1},\dots \mathbf{\widetilde{Z}}^{M}$}\label{Proof:exchangeability}
		
		As shown in \citet{he2022ghostknockoff}, when the Gaussian model $\textbf{X}\sim \text{N}(\mathbf{0},\boldsymbol{\Sigma})$ is assumed to generate $M$-knockoffs $(\mathbf{\widetilde{X}}^{1},\dots \mathbf{\widetilde{X}}^{M})$, knockoff copies of $Z$-score $\mathbf{\widetilde{Z}}^{1},\dots \mathbf{\widetilde{Z}}^{M}$ defined by (\ref{zscore_knockoff}) satisfy (\ref{Knockoff_conditional_Z}). Thus, we have 
		$\widetilde{Z}^m_j$ is a function of $\widetilde{\mathbb{X}}^m_j=\{\widetilde{x}^m_{ij}:i=1,\ldots,n\}$ and $\mathbb{Y}=\{y_i:i=1,\ldots,n\}$
		$$\widetilde{Z}^m_j=\nu(\widetilde{\mathbb{X}}^m_j,\mathbb{Y})$$ for any $m=0,\ldots,M$ and $j=1,\ldots,p$ under the convention $\mathbf{\widetilde{x}}_i^{0}=\textbf{x}_i$ ($i=1,\ldots,n$).
		
		For any $\boldsymbol{\sigma}=\{\sigma_1,\ldots,\sigma_p\}$ where 
		\begin{equation}\label{cond1}
			\begin{cases}
				\sigma_j\text{ is any permutation on }\{0,1,\ldots,M\},&\text{if }H_j\text{ is true (or }j\in\mathcal{H}_0\text{),}\\
				\sigma_j\text{ is the identity permutation, }&\text{if }H_j\text{ is false (or }j\in\mathcal{H}_1\text{),}\\
			\end{cases}
		\end{equation}we have 
		\begin{equation}\label{cond2}
			\begin{aligned}[b]
			\mathbf{\widetilde{Z}}^{0:M}_{swap(\boldsymbol{\sigma})}&=(\widetilde{Z}^{\sigma_1(0)}_1,\ldots,\widetilde{Z}^{\sigma_p(0)}_p,\ldots,\widetilde{Z}^{\sigma_1(M)}_1,\ldots,\widetilde{Z}^{\sigma_p(M)}_p)^\dT\\
			&=(\nu(\widetilde{\mathbb{X}}^{\sigma_1(0)}_1,\mathbb{Y}),\ldots,\nu(\widetilde{\mathbb{X}}^{\sigma_p(0)}_p,\mathbb{Y}),\ldots,\nu(\widetilde{\mathbb{X}}^{\sigma_1(M)}_1,\mathbb{Y}),\ldots,\nu(\widetilde{\mathbb{X}}^{\sigma_p(M)}_p,\mathbb{Y}))^\dT.
		\end{aligned}
		\end{equation}
		
		By the conditional independence property of multiple knockoffs \citep{james} that 
		 $$(\mathbf{\widetilde{X}}^{1},\dots \mathbf{\widetilde{X}}^{M})\perp Y|\mathbf{\widetilde{X}}^{0}$$
		 and hypotheses (\ref{indep}), we have $(\mathbf{\widetilde{X}}^{0}_{\mathcal{H}_0},\mathbf{\widetilde{X}}^{1},\dots \mathbf{\widetilde{X}}^{M})\perp Y|\mathbf{\widetilde{X}}^{0}_{\mathcal{H}_1}$ where $\mathbf{\widetilde{X}}^{0}_{\mathcal{H}_0}$ (or $\mathbf{\widetilde{X}}^{0}_{\mathcal{H}_1}$) is the subvector of $\mathbf{\widetilde{X}}^{0}$ corresponding to $\mathcal{H}_0$ (or $\mathcal{H}_1$). Thus, the joint distribution of $(\mathbf{\widetilde{X}}^{0},\mathbf{\widetilde{X}}^{1},\dots \mathbf{\widetilde{X}}^{M},Y)$ satisfies
		 $$\begin{aligned}[b]
		 	F(\mathbf{\widetilde{X}}^{0:M},Y)&=F(\mathbf{\widetilde{X}}^{0}_{\mathcal{H}_1})F(\mathbf{\widetilde{X}}^{0}_{\mathcal{H}_0},\mathbf{\widetilde{X}}^{1},\dots \mathbf{\widetilde{X}}^{M}|\mathbf{\widetilde{X}}^{0}_{\mathcal{H}_1})F(Y|\mathbf{\widetilde{X}}^{0}_{\mathcal{H}_1})\\
		 	&=F(\mathbf{\widetilde{X}}^{0:M})F(Y|\mathbf{\widetilde{X}}^{0}_{\mathcal{H}_1})
		 \end{aligned}$$
		 
		 By the exchangeability property of multiple knockoffs \citep{james} that $$\mathbf{\widetilde{X}}^{0:M}_{swap(\boldsymbol{\sigma})}=(\widetilde{X}^{\sigma_1(0)}_1,\ldots,\widetilde{X}^{\sigma_p(0)}_p,\ldots,\widetilde{X}^{\sigma_1(M)}_1,\ldots,\widetilde{X}^{\sigma_p(M)}_p)^\dT\,{\buildrel d \over =}\,\mathbf{\widetilde{X}}^{0:M},$$
		 we have for any $\boldsymbol{\sigma}$ satisfies (\ref{cond1}), 
		 \begin{equation}\label{cond3}
		 	\begin{aligned}[b]
		 	F(\mathbf{\widetilde{X}}_{swap(\boldsymbol{\sigma})}^{0:M},Y)
		 	&=F(\mathbf{\widetilde{X}}_{swap(\boldsymbol{\sigma})}^{0:M})F(Y|\mathbf{\widetilde{X}}^{0}_{\mathcal{H}_1})\\
		 	&=F(\mathbf{\widetilde{X}}^{0:M})F(Y|\mathbf{\widetilde{X}}^{0}_{\mathcal{H}_1})\\
		 	&=F(\mathbf{\widetilde{X}}^{0:M},Y).
		 \end{aligned}
		 \end{equation}
		
		In conclusion, we have for any $\boldsymbol{\sigma}$ satisfies (\ref{cond1}), by (\ref{cond2}) and (\ref{cond3}), 
		$$F(\mathbf{\widetilde{Z}}_{swap(\boldsymbol{\sigma})}^{0:M}){\buildrel d \over =}F(\mathbf{\widetilde{Z}}^{0:M}),$$
		and thus $\mathbf{\widetilde{Z}}^{0:M}$ possess extended exchangeability property (\ref{exchangeablility}).
	
		\newpage
	
		\section{Additional Simulation Results}
		
		\subsection{Correlation Structure}\label{Supp41}
		
		\begin{figure}[h]
			\begin{subfigure}[b]{0.33\textwidth}
				\centering
				\includegraphics[width=\textwidth]{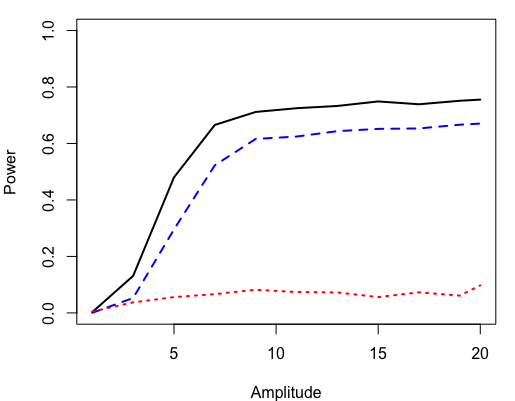}
				\caption{Correlation strength $\rho=0.25$.}
			\end{subfigure}
			\begin{subfigure}[b]{0.33\textwidth}
				\centering
				\includegraphics[width=\textwidth]{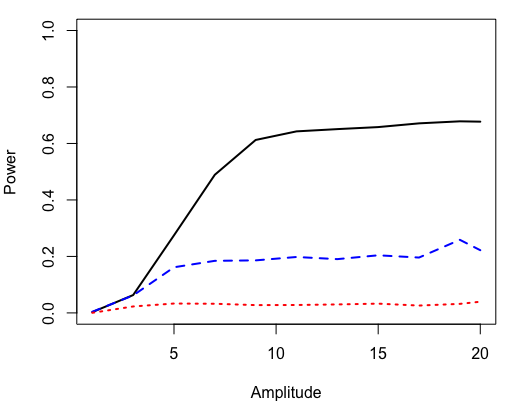}
				\caption{Correlation strength $\rho=0.5$.}
			\end{subfigure}
			\begin{subfigure}[b]{0.33\textwidth}
				\centering
				\includegraphics[width=\textwidth]{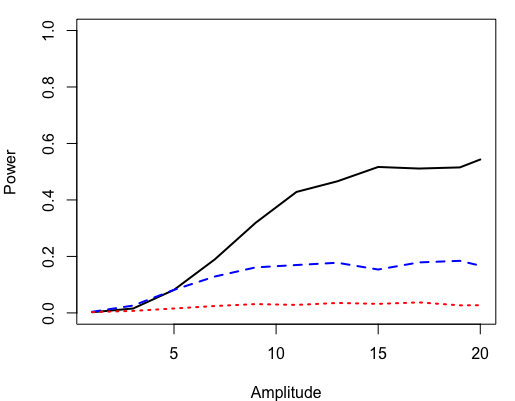}
				\caption{Correlation strength $\rho=0.75$.}
			\end{subfigure}
			\caption{Empirical power over $500$ simulated datasets of the proposed FWER filter with GhostKnockoff (black solid line), derandomized knockoffs (blue dashed line) and the procedure of \citet{janson} (red dotted line) with respect to different signal amplitudes $A$ and different correlation strengths under compound symmetry correlation structure, sample size $n=500$, dimension $p=100$, number of nonnull features $|\mathcal{H}_1|=5$, the target FWER level $\alpha=0.05$.  }
		\end{figure}
	
		\begin{figure}[h]
			\begin{subfigure}[b]{0.33\textwidth}
				\centering
				\includegraphics[width=\textwidth]{cs_100_2.png}
				\caption{Dimension $p=100$.}
			\end{subfigure}
			\begin{subfigure}[b]{0.33\textwidth}
				\centering
				\includegraphics[width=\textwidth]{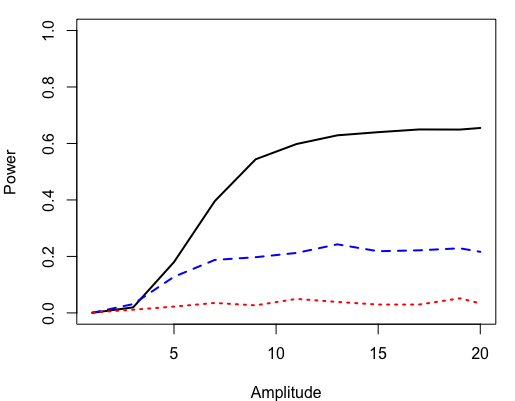}
				\caption{Dimension $p=500$.}
			\end{subfigure}
			\begin{subfigure}[b]{0.33\textwidth}
				\centering
				\includegraphics[width=\textwidth]{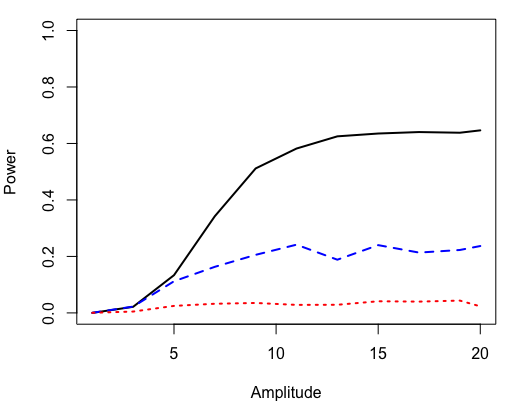}
				\caption{Dimension $p=1000$.}
			\end{subfigure}
			\caption{
				Empirical power over $500$ simulated datasets of the proposed FWER filter with GhostKnockoff (black solid line), derandomized knockoffs (blue dashed line) and the procedure of \citet{janson} (red dotted line) with respect to different signal amplitudes $A$ and different dimensions under compound symmetry correlation structure, sample size $n=500$, correlation strength $\rho=0.5$, number of nonnull features $|\mathcal{H}_1|=5$, the target FWER level $\alpha=0.05$.}
		\end{figure}

			\begin{figure}[h]
		\begin{subfigure}[b]{0.33\textwidth}
			\centering
			\includegraphics[width=\textwidth]{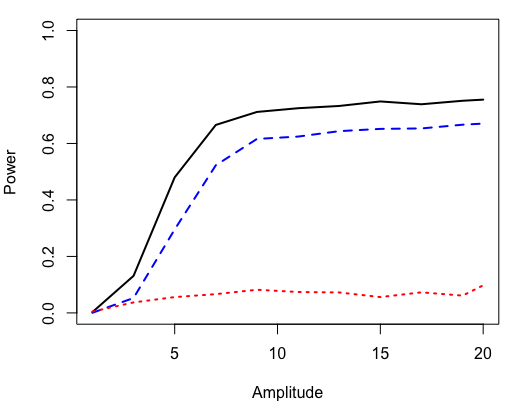}
			\caption{Correlation strength $\rho=0.25$.}
		\end{subfigure}
		\begin{subfigure}[b]{0.33\textwidth}
			\centering
			\includegraphics[width=\textwidth]{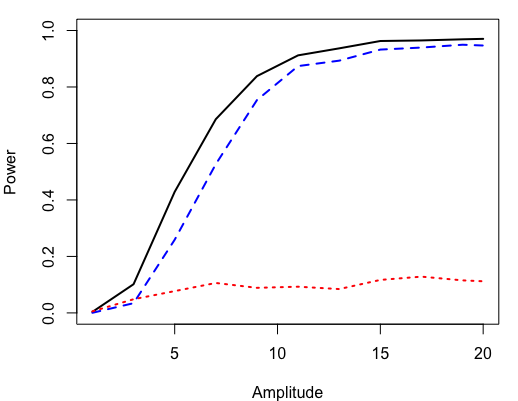}
			\caption{Correlation strength $\rho=0.5$.}
		\end{subfigure}
		\begin{subfigure}[b]{0.33\textwidth}
			\centering
			\includegraphics[width=\textwidth]{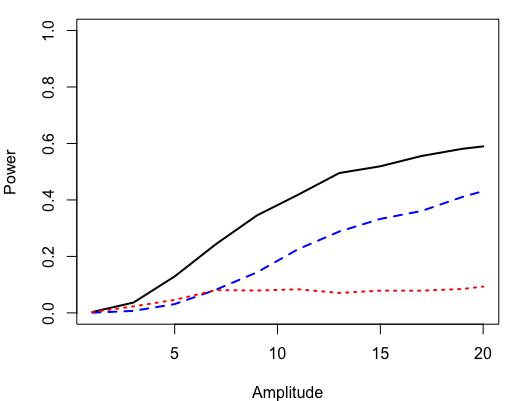}
			\caption{Correlation strength $\rho=0.75$.}
		\end{subfigure}
		\caption{Empirical power over $500$ simulated datasets of the proposed FWER filter with GhostKnockoff (black solid line), derandomized knockoffs (blue dashed line) and the procedure of \citet{janson} (red dotted line) with respect to different signal amplitudes $A$ and different correlation strengths under compound correlation structure, sample size $n=500$, dimension $p=100$, number of nonnull features $|\mathcal{H}_1|=5$, the target FWER level $\alpha=0.05$.  }
	\end{figure}
	
	\begin{figure}[h]
		\begin{subfigure}[b]{0.33\textwidth}
			\centering
			\includegraphics[width=\textwidth]{ar_100_2.png}
			\caption{Dimension $p=100$.}
		\end{subfigure}
		\begin{subfigure}[b]{0.33\textwidth}
			\centering
			\includegraphics[width=\textwidth]{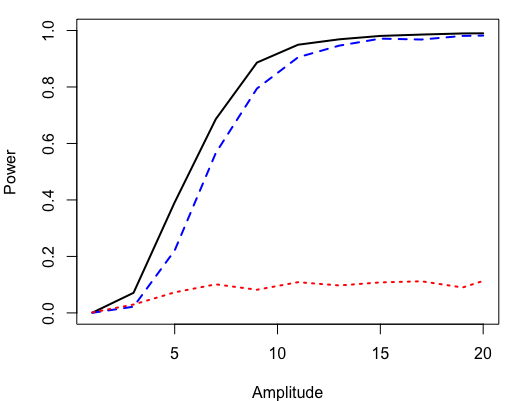}
			\caption{Dimension $p=500$.}
		\end{subfigure}
		\begin{subfigure}[b]{0.33\textwidth}
			\centering
			\includegraphics[width=\textwidth]{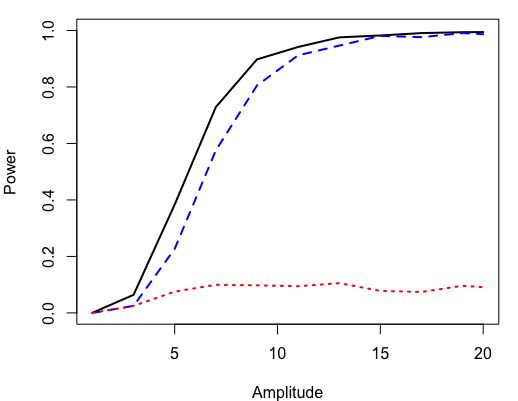}
			\caption{Dimension $p=1000$.}
		\end{subfigure}
		\caption{Empirical power over $500$ simulated datasets of the proposed FWER filter with GhostKnockoff (black solid line), derandomized knockoffs (blue dashed line) and the procedure of \citet{janson} (red dotted line) with respect to different signal amplitudes $A$ and different dimensions under AR(1) correlation structure, sample size $n=500$, correlation strength $\rho=0.5$, number of nonnull features $|\mathcal{H}_1|=5$, the target FWER level $\alpha=0.05$.}
	\end{figure}
	
	\newpage

		\subsection{Number of Nonnull Features and Sample Size}\label{Supp43}
				\begin{figure}[h]
				\begin{subfigure}[b]{0.33\textwidth}
			\centering
			\includegraphics[width=\textwidth]{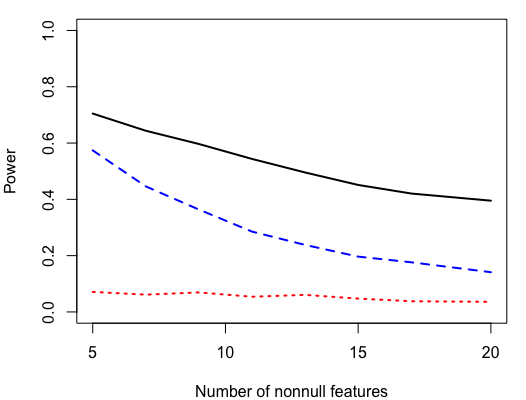}
			\caption{Correlation strength $\rho=0.25$.}
		\end{subfigure}
		\begin{subfigure}[b]{0.33\textwidth}
			\centering
			\includegraphics[width=\textwidth]{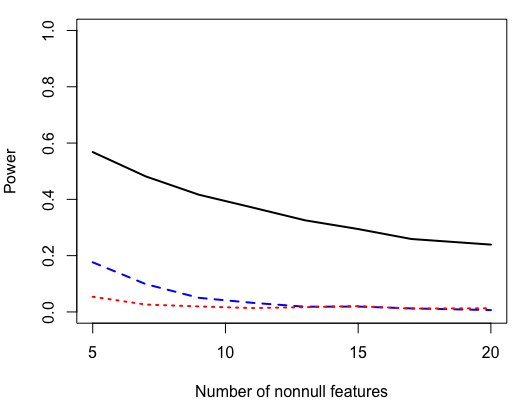}
			\caption{Correlation strength $\rho=0.5$.}
		\end{subfigure}
		\begin{subfigure}[b]{0.33\textwidth}
			\centering
			\includegraphics[width=\textwidth]{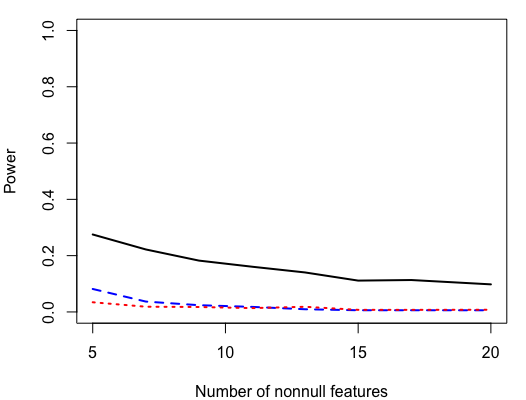}
			\caption{Correlation strength $\rho=0.75$.}
		\end{subfigure}
		\caption{Empirical power over $500$ simulated datasets of the proposed FWER filter with GhostKnockoff (black solid line), derandomized knockoffs (blue dashed line) and the procedure of \citet{janson} (red dotted line) with respect to different numbers of nonnull features $|\mathcal{H}_1|$ and different correlation strengths under compound symmetry correlation structure, sample size $n=500$, dimension $p=100$, amplitude strength $A=6$, the target FWER level $\alpha=0.05$.  }
	\end{figure}
		
				\begin{figure}[h]
	\begin{subfigure}[b]{0.33\textwidth}
		\centering
		\includegraphics[width=\textwidth]{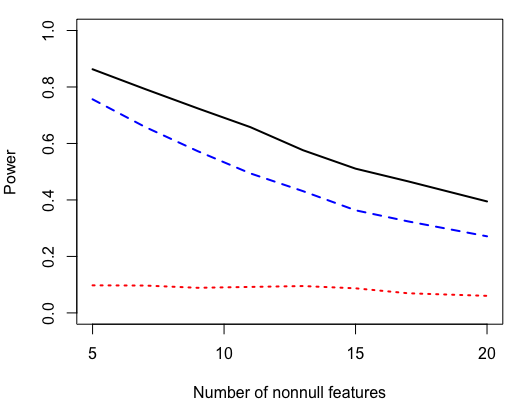}
		\caption{Correlation strength $\rho=0.25$.}
	\end{subfigure}
	\begin{subfigure}[b]{0.33\textwidth}
		\centering
		\includegraphics[width=\textwidth]{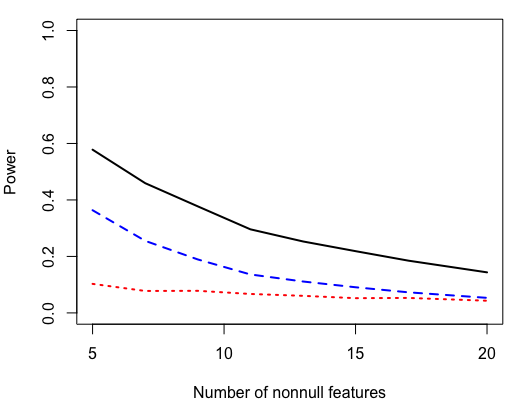}
		\caption{Correlation strength $\rho=0.5$.}
	\end{subfigure}
	\begin{subfigure}[b]{0.33\textwidth}
		\centering
		\includegraphics[width=\textwidth]{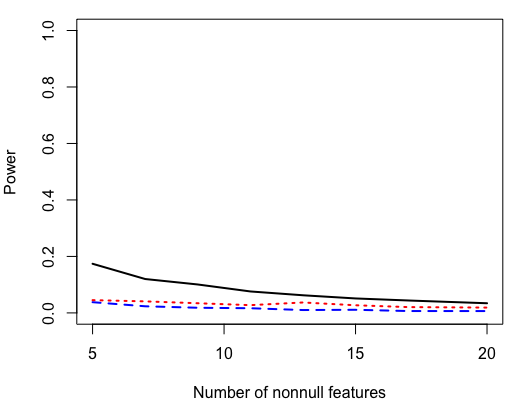}
		\caption{Correlation strength $\rho=0.75$.}
	\end{subfigure}
	\caption{Empirical power over $500$ simulated datasets of the proposed FWER filter with GhostKnockoff (black solid line), derandomized knockoffs (blue dashed line) and the procedure of \citet{janson} (red dotted line) with respect to different numbers of nonnull features $|\mathcal{H}_1|$ and different correlation strengths under AR(1) correlation structure, sample size $n=500$, dimension $p=100$, amplitude strength $A=8$, the target FWER level $\alpha=0.05$. }
\end{figure}

\begin{figure}[h]
	\begin{subfigure}[b]{0.33\textwidth}
		\centering
		\includegraphics[width=\textwidth]{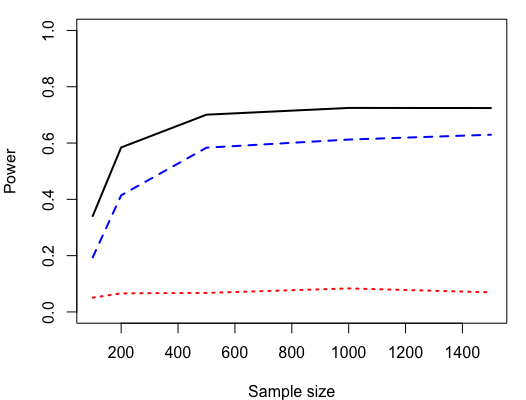}
		\caption{Correlation strength $\rho=0.25$.}
	\end{subfigure}
	\begin{subfigure}[b]{0.33\textwidth}
		\centering
		\includegraphics[width=\textwidth]{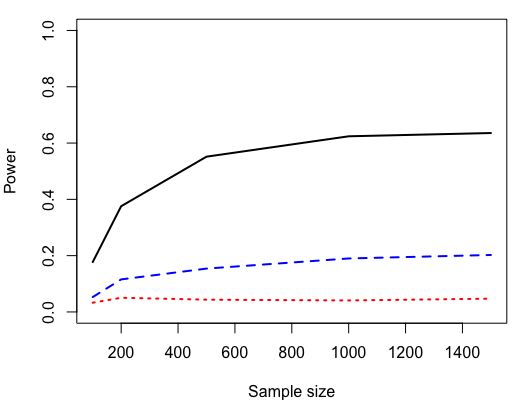}
		\caption{Correlation strength $\rho=0.5$.}
	\end{subfigure}
	\begin{subfigure}[b]{0.33\textwidth}
		\centering
		\includegraphics[width=\textwidth]{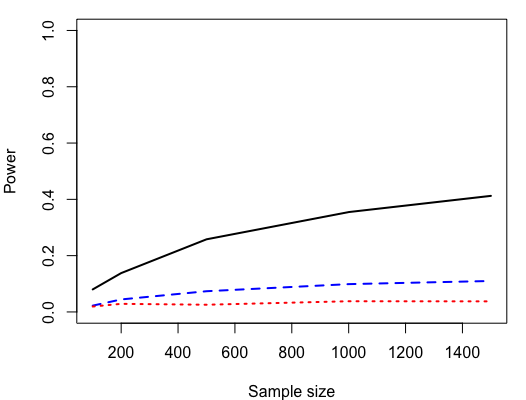}
		\caption{Correlation strength $\rho=0.75$.}
	\end{subfigure}
	\caption{Empirical power over $500$ simulated datasets of the proposed FWER filter with GhostKnockoff (black solid line), derandomized knockoffs (blue dashed line) and the procedure of \citet{janson} (red dotted line) with respect to different sample sizes $n$ and different correlation strengths under compound symmetry correlation structure, dimension $p=100$, number of nonnull features $|\mathcal{H}_1|=5$, amplitude strength $A=6$, the target FWER level $\alpha=0.05$.  }
\end{figure}

			\begin{figure}[h]
			\begin{subfigure}[b]{0.33\textwidth}
				\centering
				\includegraphics[width=\textwidth]{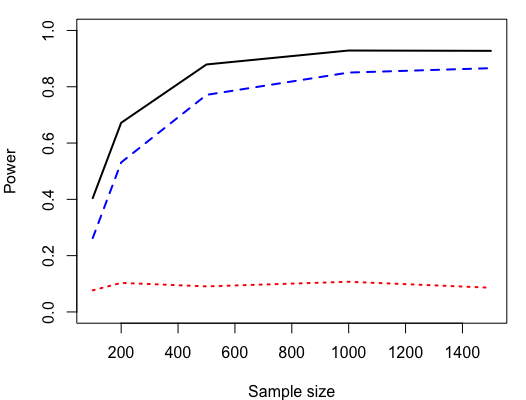}
				\caption{Correlation strength $\rho=0.25$.}
			\end{subfigure}
			\begin{subfigure}[b]{0.33\textwidth}
				\centering
				\includegraphics[width=\textwidth]{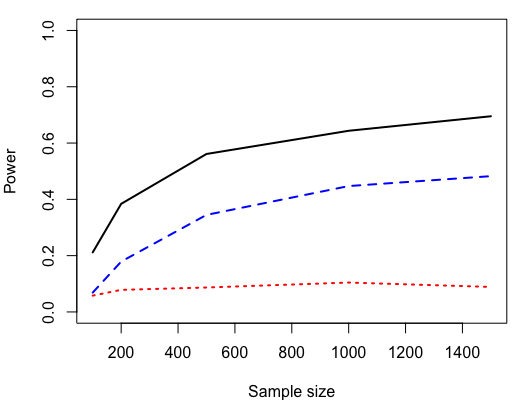}
				\caption{Correlation strength $\rho=0.5$.}
			\end{subfigure}
			\begin{subfigure}[b]{0.33\textwidth}
				\centering
				\includegraphics[width=\textwidth]{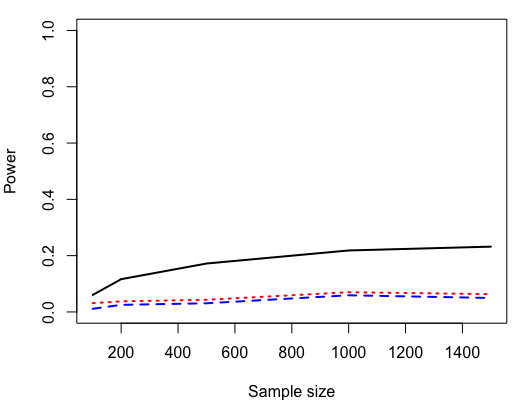}
				\caption{Correlation strength $\rho=0.75$.}
			\end{subfigure}
			\caption{Empirical power over $500$ simulated datasets of the proposed FWER filter with GhostKnockoff (black solid line), derandomized knockoffs (blue dashed line) and the procedure of \citet{janson} (red dotted line) with respect to different sample sizes $n$ and different correlation strengths under AR(1) correlation structure, dimension $p=100$, number of nonnull features $|\mathcal{H}_1|=5$, amplitude strength $A=8$, the target FWER level $\alpha=0.05$.  }
		\end{figure}

\end{appendices}

\end{document}